\def\BibTeX{{\rm B\kern-.05em{\sc i\kern-.025em b}\kern-.08em
    T\kern-.1667em\lower.7ex\hbox{E}\kern-.125emX}}
\newcommand{\setl}{\ensuremath{\mathcal{L}}}
\newcommand{\setx}{\ensuremath{\mathcal{X}}}
\newcommand{\sety}{\ensuremath{\mathcal{Y}}}
\newcommand{\setz}{\ensuremath{\mathcal{Z}}}
\newcommand{\setd}{\ensuremath{\mathcal{D}}}
\newcommand{\sete}{\ensuremath{\mathcal{E}}}
\newcommand{\bs}[1]{\boldsymbol{#1}}
\newcommand{\Wg}{W_{\text{g}}}
\newcommand{\Vg}{V_{\text{g}'}}
\newcommand{\tW}{\tilde{W}_{\text{g}}}
\newcommand{\tV}{\tilde{V}_{\text{g}'}}
\newcommand{\up}{u^\prime}
\newcommand{\upp}{u^{\prime\prime}}
\newcommand{\Mp}{M^\prime}
\newcommand{\Mpp}{M^{\prime\prime}}
\newcommand{\dc}{\mathcal{D}}
\newcommand{\dcp}{\mathcal{D}^\prime}
\newcommand{\dcpp}{\mathcal{D}^{\prime\prime}}
\newcommand{\C}{\mathcal{C}}
\newcommand{\Cp}{\mathcal{C}^\prime}
\newcommand{\Cpp}{\mathcal{C}^{\prime\prime}}
\newcommand{\Cn}{\mathcal{C}^{N_T}}
\newcommand{\Cr}{\mathcal{C}^{N_R}}
\newcommand{\h}{\mathsf{H}}
\newcommand{\circlearrow}{}
\DeclareRobustCommand{\circlearrow}{%
  \mathrel{\vphantom{\rightarrow}\mathpalette\circle@arrow\relax}%
}
\newcommand{\circle@arrow}[2]{%
  \m@th
  \ooalign{%
    \hidewidth$#1\circ\mkern1mu$\hidewidth\cr
    $#1-$\cr}%
}
\begin{document}
\title{Secure Identification for Multi-Antenna Gaussian Channels}
%
%
\author{Wafa Labidi\inst{1,2,3}\orcidID{0000-0001-5704-1725} \and
Christian Deppe\inst{2,3}\orcidID{0000-0002-2265-4887} \and
Holger Boche\inst{1,3,4,5}\orcidID{0000-0002-8375-8946}}
\authorrunning{W. Labidi, C. Deppe, H. Boche}
%
\institute{Technical University of Munich, TUM School of Computation, Information and Technology, Munich, Germany \and
Technical University of Braunschweig, Institute for Communications Technology, Braunschweig, Germany \and 6G-life, 6G research hub, Germany
\and  {\color{black}{Munich Center for Quantum Science and Technology, Munich, Germany}} \and {\color{black}{Munich Quantum Valley, Munich, Germany}} \\
\email{wafa.labidi@tum.de, christian.deppe@tu-braunschweig.de, boche@tum.de}}
\maketitle              
\centerline{\bf In memory of Ning Cai}
\begin{abstract}
New applications in modern communications are demanding robust and ultra-reliable low-latency information exchange such as machine-to-machine and human-to-machine communications. For many of these applications, the identification approach of Ahlswede and Dueck is much more efficient than the classical message transmission scheme proposed by Shannon. Previous studies concentrate mainly on identification over discrete channels. For discrete channels, it was proved that identification is robust under channel uncertainty. Furthermore, optimal identification schemes that are secure and robust against jamming attacks have been considered. However, no results for continuous channels have yet been established. That is why we focus on the continuous case: the Gaussian channel for its known practical relevance. We deal with secure identification over Gaussian channels. Provable secure communication is of high interest for future communication systems. A key technique for implementing secure communication is the physical layer security based on information-theoretic security. We model this with the wiretap channel. In particular, we provide a suitable coding scheme for the Gaussian wiretap channel (GWC) and determine the corresponding secure identification capacity. We also consider Multiple-Input Multiple-Output (MIMO) Gaussian channels and provide an efficient signal-processing scheme. This scheme allows a separation of signal processing and Gaussian coding as in the classical case. 

\keywords{Identification theory \and information-theoretic security \and Gaussian channels.}
\end{abstract}

\section{Introduction}
%
%
%
%
\subsection{Motivation} 
In the classical transmission scheme proposed by Shannon \cite{Shannon}, the encoder transmits a message over a channel, and at the receiver side, the decoder aims to estimate this message based on the channel observation. However, this is not the case for identification, a new
approach in communications suggested by Ahlswede and Dueck \cite{AhlDueck} in 1989. This new problem with the semantic aspect has enlarged the basis of information theory \cite{ahlswede21book, boche21new}. In the identification scheme, the encoder sends an identification message (also called identity) over the channel and the decoder is not interested in \emph{what} the received message is, but wants to know \emph{whether} a specific message, in which the receiver is interested, has been sent or not. Naturally, the sender has no knowledge of the message the receiver is interested in. One can assume the existence of a set of events (double exponential in the blocklength) $\{E_1,\ldots,E_N\}$, where any one of them may occur with a certain probability and is only known by the sender. The receiver is focused on an event $E_i$ and wants to know \emph{whether or not} $E_i$ occurred. In this perspective, the identification problem can be regarded as a testing of many hypothesis test problems occurring simultaneously. One of the main results in the theory of identification for Discrete Memoryless Channels (DMCs) is that the size of identification codes grows doubly exponentially fast with the blocklength. This is different from the classical transmission, where the number of messages that can be reliably communicated over the channel is exponential in the blocklength. It has been thought that the identification problem is just a generalization of the classical transmission. However, this conjecture has been disproved in \cite{feedback} and \cite{correlation}. Although feedback does not increase the transmission capacity, it does increase the identification capacity \cite{feedback, labidi2021identification, wiese2022identification}. Furthermore, it has been shown in \cite{correlation} that the correlation-assisted identification capacity has a completely different behavior than the task of message transmission. 

 Identification is an increasingly important area in communications, which has been extensively studied over the past decades. 
Message identification is much more efficient than the classical transmission scheme for many new applications with high reliability and latency requirements including machine-to-machine and human-to-machine systems {\color{black}{\cite{CompoundChannel,6Gprespective}}}, digital watermarking \cite{MOULINwatermarking,AhlswedeWatermarking,SteinbergWatermarking}, industry 4.0 \cite{industry4.0} and 6G communication systems \cite{6Gcomm}.
As mentioned in {\color{black}{\cite{tactileInternet,6Gprespective}}}, security and latency requirements have to be physically embedded. In this situation \cite{CompoundChannel}, the classical transmission is limited and an identification scheme is much more efficient. Furthermore, there has been an increasing interest in providing schemes for secure identification over noisy channels. For instance, Ahlswede and Zhang focused in \cite{AhlZhang} on identification over the discrete wiretap channel, which is a basic model considered in information-theoretic security. It was demonstrated that, in contrast to secure transmission, the identification capacity of the wiretap channel coincides with the capacity of the main channel; the channel from the legitimate sender to the legitimate receiver. {\color{black}{Secure identification can thus enable entirely new applications. The secure identification protocol achieves security by design. Indeed, with a secure message transmission capacity
above zero, the secure identification protocol achieves security without additional cost. This increases the potential of information-theoretic security in 6G applications and broadens the possibilities of physical layer security techniques \cite{Schaefer_Boche_Khisti_Poor_2017}.}} However, this is true only if it is possible to transmit data securely over the channel.
Previous work on identification focused on channels with finite input and output alphabets. Only a few studies \cite{HanBook,Burnashev} have explored identification for continuous alphabets. 
We are concerned with a continuous case: the Gaussian channel due to its practical relevance. Indeed, the Gaussian channel is a good model for many communication situations, e.g., satellite and deep space communication links, wired and wireless communications, etc. The transition from the discrete case to the continuous one is not obvious. Burnashev \cite{Burnashev} has shown that for the white Gaussian channel without bandwidth constraint and finite Shannon capacity, the corresponding identification capacity is infinite.\\
Secure identification for discrete alphabets has been extensively studied \cite{Boche2019SecureIU,IDforDataStorage,CompoundChannel, ahlswede2005secrecy, ahlswede2005transmission} over the recent decades due to its important potential use in many future scenarios. Indeed, for discrete channels, it was proved in \cite{Boche2019SecureIU} that secure identification is robust under channel uncertainty and against jamming attacks.{\color{black}{ Secure identification over MIMO Gaussian channels is exciting here because it enables the establishment of important MIMO applications. The results of this work demonstrate that MIMO signal processing remains straightforward, even for secure identification. The problem can be fundamentally addressed by integrating signal processing with coding. Recently, secure identification for discrete channels \cite{ImplementationSecureIDcodes} and identification for fading channels \cite{CodesIDfading} have already been implemented. This research task is very interesting, especially because it involves implementing solutions for MIMO Gaussian channels based on the results of this work. This could mark another important advancement towards implementing post-Shannon communication in 6G \cite{cabrera20216g} and enhancing trustworthiness in the 6G era \cite{6Gandtrustworthiness}.
 }}
{\color{black}{In \cite{spandri}, the authors develop a series of encryption schemes that have been recently proven to achieve semantic-secrecy capacity. They apply these schemes to a newly examined family of identification codes, demonstrating that incorporating secrecy into identification incurs virtually no additional cost. Consequently, the groundwork has been laid for the practical application of secure identification.}}
\subsection{Contributions}
Although many researchers are now addressing the problem of secure identification for continuous alphabets, no results have yet been established. This raises the question of whether secure identification can be reached for continuous channels. This still remains an open problem. We completely solve the Gaussian case by giving the secure identification capacity of the GWC. The wiretapper, in contrast to the discrete case, is not limited anymore and has an infinite alphabet. This is advantageous for the wiretapper since he has no limit on the hardware resolution. This means that the received signal can be resolved with infinite accuracy.
Existing cryptographic approaches commonly used for wireless local area networks can be broken with sufficient computing power. In contrast, information-theoretic security provides a tool for designing codes for a specific model that is proven to be unbreakable. We consider information theoretical security. In our coding scheme, the authorized sender wants to transmit a secure identification message to the authorized receiver so that the receiver is able to identify his message.
The unauthorized party is a wiretapper who can wiretap the transmission. He tries to identify an unknown message.
We have developed a coding scheme so that secure identification over a Gaussian channel is possible. The receiver can identify a message with high probability. Furthermore, the wiretapper is not able to identify a message with high probability. The secure identification capacity of the wiretap channel was only determined in the discrete case \cite{AhlZhang}. In our paper, we compute the secure identification capacity for the GWC, which is a key metric to assess the security level of an identification scheme. In our paper, we use the results about semantic security for Gaussian channels established in \cite{WieseBoche}.
\subsection{Outline}
In Section \ref{sec:preliminaries}, we introduce the channel model, the definitions, and the main results for transmission and identification over wiretap channels. In Section \ref{sec: Identification for GWC}, we provide a coding scheme for a secure identification over the GWC and prove the dichotomy theorem for this case. In Section \ref{sec: Identification for MIMO}, we use the Single-Input Single-Output (SISO) Gaussian results to complete the proofs for the MIMO case and elaborate an effective MIMO signal-processing scheme. Section \ref{sec:conclusions} contains concluding remarks and proposes potential future research in this field.

\section{Preliminaries} \label{sec:preliminaries}
In this section, we introduce the channel models and the notation that will be used throughout the paper. We also recall some basic definitions and known results about classical transmission as well as message identification over DMCs.
\subsection{Notation}
$\mathbb{R}$ denotes the sets of real numbers; $H(\cdot)$, $I(\cdot ;\cdot)$ are the entropy and mutual information, respectively; $\mathbb{H}$ denotes the binary entropy; $d(\cdot,\cdot)$ denotes the total variation distance between two probability distributions over the same set; all logarithms and information quantities are taken to the base $2$; the space of probability distribution on the finite set $\mathcal{A}$ is denoted by $\mathcal{P}(\mathcal{A})$; $|\boldsymbol{a}|$ denotes the L$_1$ norm of a vector $\boldsymbol{a}$; $\mathbf{A}^\h$ stands for the Hermitian transpose of the matrix $\mathbf{A}$.
\subsection{Definitions and Auxiliary Results}
We first start with the discrete case and we then characterize the secure identification capacity for the Gaussian channel.
\begin{definition}
A discrete memoryless channel (DMC)\index{channel! discrete memoryless (DMC)}
is a triple $(\setx,\sety,W)$, where \setx{} and 
\sety{} are finite sets denoted as input- respectively output alphabet, and 
\mbox{$W=\left\{W(y|x):x\in\setx, y\in\sety\right\}$} is a stochastic matrix.
The probability for a sequence $y^n=(y_1,\ldots,y_n) \in\setx^n$ to be received if 
$x^n=(x_1,\ldots,x_n) \in\setx^n$ was sent is defined by
$$
W^n(y^n|x^n)=\prod_{t=1}^n W(y_t|x_t),
$$
where $n$ is the number of channel uses.
\end{definition}
Ahlswede and Dueck defined in \cite{AhlDueck} identification codes for a DMC as the following.
\begin{definition}
 A deterministic $(n,N,\lambda_1,\lambda_2)$ identification code for a DMC $W$ is a family of pairs $\left\{(u_i,\setd_i),\quad i=1,\ldots,N \right\}$ 
 with 
\begin{equation*} u_i \in \setx^n,\quad \mathcal D_{i} \subset {\mathcal Y} ^{n},~\forall ~i\in \{1,\ldots,N\} \end{equation*}
 such that for $\lambda_1+\lambda_2<1$ we have:
\begin{align}
& W^n(\setd_i^c|u_i) \leq \lambda_1 \quad  \forall i, \\
& W^n(\setd_j|u_i) \leq \lambda_2 \quad \forall i\neq j.
\end{align}
\end{definition}
\begin{definition} \label{def:randomizeD id}
 A randomized $(n,M,\lambda_1,\lambda_2)$ identification code for a DMC $W$ is a family of pairs $\left\{(Q(\cdot|i),\setd_i),\quad i=1,\ldots,N \right\}$ 
 with 
 \begin{equation*} Q(\cdot|i) \in \mathcal P \left ({{\mathcal X}^{n} }\right),\quad \mathcal D_{i} \subset {\mathcal Y} ^{n},~\forall ~i\in \{1,\ldots,N\} \end{equation*}
 such that for $\lambda_1+\lambda_2<1$ we have:
\begin{align}
& \sum_{x^n \in \setx^n}Q(x^n|i) W^n(\setd_i^c|x^n) \leq \lambda_1 \quad \forall i, \\
& \sum_{x^n \in \setx^n}Q(x^n|j) W^n(\setd_i|x^n) \leq \lambda_2\quad \forall i \neq j.
\end{align}
\end{definition}

One of the main results in identification theory is the identification coding theorem proved by Ahlswede and Dueck. Contrary to transmission, where the number of messages that can be reliably communicated over the channel is exponential in the blocklength, the size of identification codes grows doubly exponentially fast with the blocklength. Han and Verd\`{u} proved the strong converse in \cite{HanVerdu}.
\begin{theorem}\cite{AhlDueck, HanVerdu}
	Let $W$ be a {\emph{finite DMC}}, $N(n,\lambda)$ the maximal number such that an $(n,N, \lambda_1, \lambda_2)$ identification code for $W$ exists with $\lambda_1, \lambda_2 \leq \lambda $, and let $C(W)$ be the Shannon capacity of $W$ then:
	\begin{equation*}
	C_{ID}(W) \triangleq \lim_{n \to \infty} \frac{1}{n} \log \log N (n, \lambda)
	= C(W)\quad \text{for }  \lambda \in (0, \frac{1}{2}),
	\end{equation*} \label{IDtheorem}
	where $C_{ID}(W)$ defines the identification capacity of the DMC $W$.
\end{theorem}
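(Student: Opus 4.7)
The plan is to prove the two bounds $C_{ID}(W) \geq C(W)$ (achievability) and $C_{ID}(W) \leq C(W)$ (strong converse) separately, following the classical arguments of Ahlswede--Dueck and Han--Verd\'u.

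For the achievability, I will use the two-step construction of \cite{AhlDueck}. First, fix any rate $R < C(W)$ and pick a good transmission code $\{(x_j^n,\mathcal{E}_j)\}_{j=1}^M$ with $M=\lceil 2^{nR}\rceil$ codewords and maximal error probability at most $\epsilon$. The idea is then to associate each identity $i\in\{1,\ldots,N\}$ with a subset $T_i\subset\{1,\ldots,M\}$ of size $K$, set $Q(\cdot|i)$ to be the uniform distribution on $\{x_j^n : j\in T_i\}$, and define $\setd_i=\bigcup_{j\in T_i}\mathcal{E}_j$. The type-I error is then bounded by $\epsilon$ (inherited from the transmission code), and the type-II error by roughly $\epsilon+|T_i\cap T_j|/K$. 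Hence it suffices to construct a family $\{T_i\}_{i=1}^N$ with pairwise intersection at most $\delta K$, which for small $\delta$ can be done by a random-selection / second-moment argument (or equivalently via concatenation with a Reed--Solomon outer code). A standard counting shows this yields $N\approx 2^{2^{nR}}$, so $\liminf_{n\to\infty}\tfrac{1}{n}\log\log N\geq R$, and letting $R\uparrow C(W)$ completes the direction.

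For the strong converse, I will follow the resolvability approach of Han--Verd\'u \cite{HanVerdu}. Every randomized encoder $Q(\cdot|i)$ induces an output distribution $P_i(y^n)=\sum_{x^n}Q(x^n|i)W^n(y^n|x^n)$. The key step is channel resolvability: any such $P_i$ can be approximated in total variation, to within any $\gamma>0$, by a distribution supported on at most $2^{n(C(W)+\gamma)}$ output sequences. Combining this approximation with the reliability constraint $\lambda_1+\lambda_2<1$, one shows that the induced output distributions must be pairwise $d$-separated by a constant amount; but only doubly exponentially many such well-separated distributions can coexist in a space of cardinality $2^{n(C(W)+\gamma)}$. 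Taking $\gamma\downarrow 0$ yields $\limsup_{n\to\infty}\tfrac{1}{n}\log\log N(n,\lambda)\leq C(W)$.

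The main obstacle I expect is the strong converse. The achievability reduces, once a capacity-achieving transmission code is in hand, to a combinatorial packing problem that is by now routine. The converse, by contrast, requires the nontrivial resolvability step and a careful total-variation argument to turn the pairwise reliability condition $\lambda_1+\lambda_2<1$ into an upper bound on the number of simultaneously distinguishable output distributions; this is also precisely where the restriction $\lambda\in(0,\tfrac{1}{2})$ enters, since otherwise the separation of the induced output distributions cannot be enforced.
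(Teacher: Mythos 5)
Your outline is correct and follows essentially the same route as the sources the paper cites for this theorem (the paper itself only states it with references): achievability by concatenating a capacity-achieving transmission code with a large family of size-$K$ subsets having small pairwise intersections, exactly the construction the paper reuses later for the wiretap coding scheme, and the strong converse via Han--Verd\'u channel resolvability. No gaps worth flagging at this level of detail.
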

\begin{remark}
In contrast to transmission over DMCs, randomization is essential to achieve the identification capacity. 
 We can also achieve this double exponential growth of the code size by using deterministic codes and by considering rather the average error probability (see \cite{explicitConstr}). However, in practice, it is necessary to consider the maximal error for each message. In the sequel, we will stick to randomized identification codes. Contrary to Transmission codes, the decoding sets of an identification code may overlap.
\end{remark}
\begin{figure}[!t]
\centering
\tikzstyle{farbverlauf} = [ top color=white, bottom color=white!80!gray]
\tikzstyle{block} = [draw,top color=white, bottom color=white!80!gray, rectangle, rounded corners,
minimum height=2em, minimum width=2.5em]
\tikzstyle{input} = [coordinate]
\tikzstyle{sum} = [draw, circle,inner sep=0pt, minimum size=2mm,  thick]
\tikzstyle{arrow}=[draw,->] 
\begin{tikzpicture}[auto, node distance=2cm,>=latex']
\node[] (M) {$L$};
\node[block,right=.5cm of M] (enc) {Encoder};
\node[sum, right=1cm of enc] (channel) {$+$};
\node[block, right=1cm of channel] (dec) {Decoder};
\node[right=.5cm of dec] (Mhat) {$\hat{L}$};
\node[above=1cm of channel] (noise) {$\xi^n$};
\draw[->] (M) -- (enc);
\draw[->] (enc) --node[above]{$x^n$} (channel);
\draw[->] (noise) -- (channel);
\draw[->] (channel) --node[above]{$y^n$} (dec);
\draw[->] (dec) -- (Mhat);
\end{tikzpicture}
\caption{Message transmission over the discrete-time Gaussian channel.}
\label{Gaussianpic}
\end{figure}
\begin{definition}
The discrete-time Gaussian channel is a triple $(\setx,\sety,\Wg)$ given by
\begin{equation}
\Wg \colon y_i=x_i+\xi_i,\quad \forall i \in \{1,\ldots,n\}, \label{GaussianChannel}
\end{equation}
where $x^n=(x_1,x_2,\ldots,x_n) \in \setx^n \subset \mathbb{R}^n$ and $y^n=(y_1,y_2,\ldots,y_n) \in \sety^n\subset \mathbb{R}^n$ are the channel input and output blocks, respectively. The sequence $\xi^n=(\xi_1,\xi_2,\ldots,\xi_n)$ denotes the channel noise. The RVs 
 $\xi_i$ are independent and identically distributed (i.i.d.) for $i \in \{1,\ldots,n\}$. Each noise sample $\xi_i,\ i=1,\ldots,n$ is drawn from a normal distribution denoted by g with variance $\sigma^2$. If the noise variance is zero or the input is unconstrained, the Shannon capacity of the channel is infinite. Message transmission over the Gaussian channel model is depicted in Fig. \ref{Gaussianpic}. The sender wants to transmit a message $L \in \{1,\ldots, M\}$ over the Gaussian channel $W_g$. The message $L$ is encoded into a codeword $x^n$ and sent over $W_g$. At the receiver's side, the decoder attempts to estimate the sent message, denoted by $\hat{L}$, based on the channel output $y^n$.
The most common limitation on the input is a power constraint, e.g.,
   \begin{equation} \frac{1}{n} \sum_{i=1}^{n} x_i^2 \leq  P .\label{PowerConst} \end{equation}
 \end{definition}
In the sequel, we consider the average power constraint described in \eqref{PowerConst}. We define the constrained input set $\setx_{n,P}$ as follows:  
 \begin{equation}
     \setx_{n,P}=\left\{ x^n \in \setx^n \subset \mathbb{R}^n \colon \sum_{i=1}^{n} x_i^2 \leq n \cdot P \right\}. \label{inputConstrained}
 \end{equation}
 We denote the Shannon capacity and the identification capacity of $\Wg$ by $C(\text{g},P)$ and $C_{ID}(\text{g},P)$, respectively.\\
We extend the definitions of a transmission code and an identification code to the Gaussian channel $\Wg$.
 \begin{definition}
 A randomized $(n,M,\lambda)$ transmission code for the Gaussian channel $\Wg$ is a family of pairs $\{(Q(\cdot|i),\setd_i),\ i=1,\dots,M\}$ with
\begin{equation*} Q(\cdot|i) \in \mathcal P \left ({{\mathcal X}_{n,P} }\right), \quad \mathcal D_{i} \subset {\mathcal Y} ^{n} \end{equation*} such that for all $i\neq j$
\begin{equation} \mathcal D_{i}\cap \mathcal D _{j}=\emptyset \end{equation}, 
for all $i$ and $ \lambda \in (0,1)$
\begin{equation}\int _{x^{n}\in {\mathcal X} _{n,P}} Q(x^n|i) \Wg^{n}(\mathcal D_{i}|x^{n}) d^n x^n\geq 1-\lambda. \end{equation} 
 \end{definition}
\begin{definition} A randomized $(n,N,\lambda_1,\lambda_2)$ identification code  for the Gaussian channel $\Wg$ is a family of pairs $\left\{(Q(\cdot|i),\setd_i),\quad i=1,\ldots,N \right\}$ with 
\begin{equation*} Q(\cdot|i) \in \mathcal P \left ({{\mathcal X}_{n,P} }\right),\quad \mathcal D_{i} \subset {\mathcal Y} ^{n},~\forall ~i\in \{1,\ldots,N\} \end{equation*}
such that for all $i,j \in \{1,\ldots,N\}$, $i \neq j$ and $\lambda_1+\lambda_2 <1$
\begin{align}
& \int_{x^n \in \setx_{n,P}} Q(x^n|i) \Wg^n(\setd_i^c|x^n) d^n x^n \leq\lambda_1,  \\
& \int_{x^n \in \setx_{n,P}} Q(x^n|j) \Wg^n(\setd_i|x^n) d^n x^n \leq\lambda_2 \quad \forall i\neq j.
\end{align} \end{definition}
The identification coding theorem is extended to the Gaussian case. It is worth pointing out that the direct part is similar in spirit to the discrete case. For more details, we refer the reader to \cite{MasterThesis}. The converse part is established in \cite{Burnashev}.
\begin{theorem} \cite{Burnashev} \label{theorem:ID_Gaussian}
Let $N(n,\lambda)$ be the maximal number such that an $(n,N,\lambda_1,\lambda_2)$ identification code for the channel $\Wg$ exists with $\lambda_1, \lambda_2 \leq \lambda$ and $\lambda \in (0,\frac{1}{2})$ then:
\begin{equation*}
C_{ID}(\text{g},P) \triangleq \lim_{n \to \infty} \frac{1}{n} \log \log N (n, \lambda)
= C(\text{g},P).
\end{equation*}
\end{theorem}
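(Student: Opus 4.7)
The plan is to establish the theorem in two parts. The converse bound $C_{ID}(\text{g},P) \leq C(\text{g},P)$ is already available from Burnashev's strong converse for Gaussian channels, so the real work is the direct part $C_{ID}(\text{g},P) \geq C(\text{g},P)$. For this I would adapt the Ahlswede--Dueck transformation from a transmission code to an identification code, taking care to respect the power constraint \eqref{PowerConst} and the continuous output alphabet.

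For the construction, fix $\delta>0$ and let $R_T = C(\text{g},P)-\delta$. By Shannon's classical achievability over the power-constrained Gaussian channel, for each $n$ there exists a deterministic $(n,M_n,\mu_n)$ transmission code $\{(u_k,\mathcal{D}_k^T)\}_{k=1}^{M_n}$ for $\Wg$ with disjoint decoding sets, $M_n\geq 2^{nR_T}$, $u_k \in \setx_{n,P}$ for every $k$, and $\mu_n \to 0$. I would set $K_n = 2^{n(R_T-\delta)}$ and draw $N = 2^{2^{n(R_T-2\delta)}}$ index sets $T_1,\dots,T_N$ uniformly and independently from the $K_n$-subsets of $\{1,\dots,M_n\}$. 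The identification encoder $Q(\cdot|i)$ is the uniform distribution on $\{u_k : k\in T_i\}$, which is automatically supported in $\setx_{n,P}$, and the identification decoder is $\mathcal{D}_i = \bigcup_{k\in T_i}\mathcal{D}_k^T$.

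For the error analysis I would argue as follows. The first-kind error is controlled directly by the transmission code,
\begin{equation*}
\frac{1}{K_n}\sum_{k\in T_i} \Wg^n(\mathcal{D}_i^c|u_k) \leq \frac{1}{K_n}\sum_{k\in T_i}\Wg^n((\mathcal{D}_k^T)^c|u_k) \leq \mu_n.
\end{equation*}
For the second-kind error, disjointness of the transmission decoding sets gives
\begin{equation*}
\frac{1}{K_n}\sum_{k\in T_j}\Wg^n(\mathcal{D}_i|u_k) \leq \mu_n + \frac{|T_i\cap T_j|}{K_n},
\end{equation*}
so it suffices to show that with positive probability the random assignment yields $|T_i\cap T_j|/K_n \leq \eta$ for all $i\neq j$ and some $\eta$ with $\mu_n+\eta < \lambda_2$. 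Since $|T_i\cap T_j|$ is hypergeometric with mean $K_n^2/M_n = 2^{-n\delta}K_n$, a Chernoff bound gives a doubly exponentially small tail of the form $\exp(-c\,K_n) = \exp(-c\,2^{n(R_T-\delta)})$, which survives a union bound over the $N^2$ pairs because $2\log_2 N = 2\cdot 2^{n(R_T-2\delta)}$ is dominated by $c\,2^{n(R_T-\delta)}$. This yields the desired identification rate $R_T-2\delta = C(\text{g},P)-3\delta$; letting $\delta\to 0$ proves achievability.

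The main technical obstacle is making the probabilistic existence argument work when $N$ is doubly exponential: the concentration inequality used for the overlaps must decay doubly exponentially in $n$ and dominate the $N^2$ union bound, which pins down the admissible ratios between $R_T$, $\delta$, and $K_n$. A secondary subtlety is the transition from the finite-alphabet setting to continuous inputs and outputs: the decoding sets $\mathcal{D}_k^T$ are now measurable subsets of $\mathbb{R}^n$ and the encoding distributions are probability measures rather than pmfs, but because the identification encoders inherit their supports from the transmission codewords $u_k \in \setx_{n,P}$, the power constraint is automatically preserved and every integral in the identification-code definition reduces to a finite sum against $\Wg^n$-probabilities of measurable sets, exactly as in the discrete case.
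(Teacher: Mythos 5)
Your proposal is correct and follows essentially the route the paper itself indicates: the converse is taken from Burnashev, and the direct part is the standard Ahlswede--Dueck transformation of a capacity-achieving power-constrained transmission code into a randomized identification code via random $K_n$-subsets, with the hypergeometric concentration of $|T_i\cap T_j|$ beating the $N^2$ union bound. Your handling of the continuous alphabet (atomic encoding measures supported on codewords in $\setx_{n,P}$, so all integrals reduce to finite sums) matches the paper's remark that the direct part is "similar in spirit to the discrete case."
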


\begin{definition}
A discrete memoryless wiretap channel is a quintuple \\$( \setx,\sety,\setz,W,V) $, where $\setx$ is the finite input alphabet, $\sety$ is the finite output alphabet for the legitimate receiver, $\setz$ is the finite output alphabet for the wiretapper, $W=\{W(y|x):x \in \setx, y\in \sety\}$ is the set of the transmission matrices whose output is available to the legitimate receiver, and $V=\{V(z|x):x\in \setx,z\in \setz \}$ is the set of transmission matrices whose output is available to the wiretapper. For notational simplicity, we denote the wiretap channel with the pair $(W,V)$.
\end{definition}
The discrete wiretap channel was first introduced by Wyner \cite{Wyner}. He determined the secrecy capacity when the channel is physically degraded, i.e., when $X-Y-Z$ is a Markov chain and under weak secrecy criterion. Wyner introduced in \cite{Wyner}  transmission wiretap codes, which are transmission codes fulfilling a secrecy requirement. Csiszár and Körner \cite{BC} extended the results for the non-degraded case depicted in Fig. \ref{fig:wiretap}. The channel has two outputs: One is for the legitimate receiver (Bob) and the other is for the eavesdropper (Eve). The sender (Alice) has to send a message $L$ to the legitimate receiver, who then outputs an estimation $\hat{L}$. The message $L$ should be kept secret from the eavesdropper. Thus, we have a secrecy requirement (in this paper, we restrict our attention to strong secrecy).
 \begin{equation} I(L;Z^n) \leq \delta_1\quad \text{with  }\delta_1>0,  \label{SecrecyReq} \end{equation} where $Z^n$ is the output to the eavesdropper. The legitimate receiver should be able to decode his message with a small error probability $P_e^{(n)}$.
 \begin{equation} P_e^{(n)} \triangleq \Pr[\hat{L}\neq L] \leq \delta_2\quad \text{with  } \delta_2>0. \label{reliabReq} \end{equation}
In the following, we recall the definition of a randomized transmission code for the discrete wiretap channel.
 \begin{definition}
An $(n,M,\lambda)$ randomized transmission code for the wiretap channel $(W,V)$ is a family of pairs $\{(Q(\cdot|i),\setd_i),\ i=1,\dots,M\}$ with 
\begin{align*} Q(\cdot|i)\in \mathcal P \left ( \setx^{n} \right),\mathcal D_{i} \subset {\mathcal Y} ^{n}, && \forall ~i\in \{1,\ldots,N\}  \end{align*} 
such that for all $i\in \{1,\ldots,M\}$ and $i\neq j$
\begin{align}
\sum_{x^n \in \setx^n} Q(x^n|i) W^n(\setd_i^c|x^n)  &\leq\lambda,  \\
 \setd_i \cap \setd_j & =  \emptyset,\\
I(L;Z^n) & \leq \lambda, \label{stronSecrecy}
\end{align}
where $L$ is a uniformly distributed random variable (RV) on $\{1,\ldots,M\}$ as defined earlier. The RV $Z$ denotes the output of Eve's channel $V$ i.e., the wiretapper's observation.
\end{definition} 
\begin{figure}[!t]
\centering
\scalebox{.8}{\tikzstyle{block} = [draw, top color=white, bottom color=white!80!gray, rectangle, rounded corners,
minimum height=2em, minimum width=2cm]
\tikzstyle{blockchannel} = [draw, top color=white, bottom color=white!80!gray, rectangle, rounded corners,
minimum height=2cm, minimum width=.1cm]
\tikzstyle{input} = [coordinate]
\usetikzlibrary{arrows}
\begin{tikzpicture}[scale= 1,font=\footnotesize]
\node[] (m) {\small $L$};
\node[block,right=.5cm of m] (enc) {\small Encoder};
\node[blockchannel, right=.7cm of enc](channel) {\small Channel
$(W,V)$};
\node[block,right= .7cm of channel.390] (bob) {\small Decoder};
\node[block,right=.7cm of channel.330] (eve) {\small Eavesdropper};
\node[right=.5cm of bob] (what) {\small $\hat{L}$};
\node[draw,circle,minimum size=.5cm,inner sep=0pt, right=.5cm of eve] (wbar) {\small $\cancel{L}$ }; 

\draw[->] (m) -- (enc);
\draw[->] (enc) -- node[above]{$X$}  (channel);
\draw[->] (channel.390) -- node[above]{$Y$} (bob);
\draw[->] (channel.330) -- node[above]{$Z$} (eve);
\draw[->] (bob) -- (what);
\draw[->] (eve) -- (wbar);
\end{tikzpicture}}
\caption{The wiretap channel model.}
\label{fig:wiretap}
\end{figure}
Ahlswede and Zhang defined in \cite{AhlZhang} identification codes for discrete wiretap channels as follows.
\begin{definition}
 A randomized $(n,N,\lambda_1,\lambda_2)$ identification code for a wiretap channel $(W,V)$ is a family of pairs  $\{(Q(\cdot|i),\setd_i),\ i=1,\dots,N\}$  with 
\begin{equation*} Q(\cdot|i) \in \mathcal P \left ({{\mathcal X}^{n} }\right),\quad \mathcal D_{i} \subset {\mathcal Y} ^{n},~\forall ~i\in \{1,\ldots,N\} \end{equation*} such that for all $i,j \in \{1,\ldots,N\}$, $i \neq j$ and any $\sete \in \setz^n$
\begin{align} \sum _{x^{n}\in {\mathcal X} ^{n}}{Q(x^n|i) W^{n}\left ({\mathcal D_{i}|x^{n}}\right)} & \ge 1-\lambda _{1}, \\\sum _{x^{n}\in {\mathcal X} ^{n}}{Q(x^n|j) W^{n}\left ({\mathcal D_{i}|x^{n}}\right)} & \le \lambda _{2}, \\\sum _{x^{n}\in {\mathcal X} ^{n}}Q(x^n|j) V^{n}(\mathcal E|x^{n})\notag \\ \qquad \,\,\,+\ \sum _{x^{n}\in {\mathcal X} ^{n}}Q(x^n|i)V^{n}(\mathcal E^{c}|x^{n}) &\geq 1-\lambda. \label{Wcond} \end{align}
We denote with $N_S(n,\lambda)$ the maximal cardinality such that an $(n,N,\lambda_1,\lambda_2)$ identification wiretap code for the channel $(W,V)$ exists. 
\end{definition}
It was shown in \cite{Igor} that \eqref{Wcond} implies strong secrecy. 
\begin{definition}
    \begin{itemize}
\item The secure identification rate $R$ of a wiretap channel is said to be achievable if for all $ \lambda \in (0,1)$ there exists a $n(\lambda)$, such that for all $n \geq n(\lambda) $ there exists an $(n,N,\lambda,\lambda)$ wiretap identification code. 
\item The secure identification capacity $C_{SID}(W,V)$ of a wiretap channel $(W,V)$ is the supremum of all achievable rates.
   \end{itemize}
\end{definition}
Ahlswede and Zhang proved in \cite{AhlZhang} the following \emph{dichotomy} theorem.
\begin{theorem} \cite{AhlZhang} 
Let $C(W)$ be the Shannon capacity of the channel $W$ and let $C_S(W,V)$ be the secrecy capacity of the wiretap channel $(W,V)$, then the secure identification capacity of $(W,V)$ is computed as follows:
\begin{equation*}
C_{SID}(W,V)= \begin{cases} C(W) & \text{if } C_S(W,V) > 0 \\
0 & \text{if }C_S(W,V)=0. \end{cases}
\end{equation*}
\end{theorem}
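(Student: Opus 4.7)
My plan is to establish the theorem in two cases corresponding to the two branches of the dichotomy, treating the achievability and converse of each separately.

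\textbf{Case 1: $C_S(W,V) > 0$.} The converse inequality $C_{SID}(W,V) \leq C(W)$ is immediate: every $(n,N,\lambda_1,\lambda_2)$ secure identification code is in particular an ordinary $(n,N,\lambda_1,\lambda_2)$ identification code for $W$ (just drop the distinguishability condition \eqref{Wcond}), so the bound follows from Theorem \ref{IDtheorem} of Ahlswede--Dueck and Han--Verd\'u. For the direct part, I would adapt the classical two-layer Ahlswede--Dueck identification construction to enforce the secrecy requirement. Recall that the construction has (i) an outer combinatorial/coloring layer assigning to each of the $N \approx 2^{2^{nR}}$ identities a small "fingerprint" set of codeword indices with the property that any two distinct fingerprints overlap very little, and (ii) an inner transmission code at rate $R$ arbitrarily close to $C(W)$. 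I would keep the outer layer intact and replace the inner transmission code with a stochastic encoder that uses part of its local randomness to embed a strongly-secret sub-message; such an embedding at positive sub-rate $R_S < C_S(W,V)$ exists by the Csisz\'ar--K\"orner secrecy capacity theorem. The secret sub-message is used as a one-time seed that randomizes which fingerprint maps to each identity, so that the wiretapper's posterior over identities, conditioned on her observation $Z^n$, is (up to total variation $\lambda$) independent of $i$. This yields \eqref{Wcond} uniformly over all pairs $(i,j)$, while the identification rate remains $R \to C(W)$ because only a sublinear $\log\log$ contribution is consumed by the secret seed.

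\textbf{Case 2: $C_S(W,V) = 0$.} Here I must show $C_{SID}(W,V) = 0$. By the Csisz\'ar--K\"orner formula, $C_S(W,V) = 0$ implies that for every auxiliary variable $U$ with $U \circlearrow X \circlearrow YZ$ one has $I(U;Y) \leq I(U;Z)$, i.e.\ the wiretap channel is at least as informative as the legitimate receiver's channel for every choice of input. The plan is to argue by contradiction: assume a secure identification code of rate $R > 0$ exists; then in particular an ordinary identification code of rate $R$ exists with decoding sets $\mathcal{D}_i \subset \sety^n$ satisfying the reliability condition. Using the "more informative" property, I would construct from each $\mathcal{D}_i$ an emulated decoding set $\mathcal{E}_i \subset \setz^n$ for the wiretapper whose two error probabilities are no larger than those of $\mathcal{D}_i$, so that the wiretapper can distinguish identity $i$ from identity $j$ just as well as the legitimate receiver. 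This directly falsifies \eqref{Wcond} and forces $R = 0$.

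\textbf{Main obstacle.} The hard part is the achievability in Case 1: verifying that the secret-seed scrambling induces the distinguishability condition \eqref{Wcond} \emph{uniformly} over all $\binom{N}{2}$ pairs of identities, where $N$ is doubly exponential. A naive union bound over pairs of identities would require strong secrecy to decay faster than $N^{-2} \approx 2^{-2 \cdot 2^{nR}}$, which is far stronger than the standard exponential strong-secrecy rate. The technical crux is to exploit the random-coloring structure of the outer layer so that distinguishability for a typical pair $(i,j)$ follows from a \emph{single} strong-secrecy statement about the induced wiretap output distribution, bypassing the doubly exponential union bound. Once this is in place, the combinatorial fingerprint analysis and the rate computation proceed essentially as in the original Ahlswede--Dueck argument.
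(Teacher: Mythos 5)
Your upper bound $C_{SID}(W,V)\le C(W)$ in Case 1 is correct and is exactly what the paper does. The rest of the proposal has two problems, one fixable and one fatal.

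\textbf{Case 2 is where the genuine gap lies.} You argue that $C_S(W,V)=0$, i.e.\ $I(U;Y)\le I(U;Z)$ for every $U\circlearrow X\circlearrow YZ$, lets the wiretapper \emph{emulate} each decoding set $\mathcal D_i$ by some $\mathcal E_i\subset\setz^n$ with no worse error probabilities. This does not follow: the condition $C_S=0$ only says the eavesdropper's channel is \emph{less noisy} than the main channel, and less noisy does not imply stochastic degradedness or any Blackwell-type ordering that would let you transport a decoder from $\sety^n$ to $\setz^n$ while preserving both error probabilities. The argument the paper (following Ahlswede--Zhang) uses runs in the opposite direction and needs no emulation: fix two identities $i\neq j$, let $U$ be uniform on $\{i,j\}$, and let $Q_{y,i},Q_{y,j}$ (resp.\ $Q_{z,i},Q_{z,j}$) be the induced output distributions. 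The reliability conditions \eqref{firstRequirement}--\eqref{secondRequirement} give $Q_{y,i}(\setd^c)+Q_{y,j}(\setd)\le 2\lambda$, which by Lemma \ref{ylemma} forces $I(U;Y^m)\ge H_2\bigl(\tfrac12(1-2\lambda)\bigr)$, bounded away from $0$; the distinguishability condition \eqref{Wcond} gives $Q_{z,j}(\sete)+Q_{z,i}(\sete^c)\ge 1-\lambda$ for all $\sete$, which by Lemma \ref{zlemma} makes $I(U;Z^m)$ arbitrarily small for small $\lambda$. Hence $I(U;Y^m)>I(U;Z^m)$ for this $U$, which certifies $C_S(W,V)>0$ and yields the contrapositive $C_S=0\Rightarrow C_{SID}=0$. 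You should replace your emulation step by this mutual-information comparison.

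\textbf{Case 1 achievability is incomplete and the construction is not quite right.} The obstacle you flag (a union bound over doubly exponentially many pairs) is real, but you leave its resolution as a conjecture, and your ``secret seed that randomizes which fingerprint maps to each identity'' is not the construction that makes it go away. The Ahlswede--Zhang scheme concatenates an \emph{ordinary} transmission code of blocklength $n$ at rate near $C(W)$, carrying the coloring index $j$ in the clear, with a \emph{short} wiretap code of blocklength $\lceil\sqrt n\rceil$ carrying only the color $T_i(j)$. The only identity-dependent part of the transmitted signal is the color; since the coloring index $j$ is uniform regardless of $i$ and the wiretap code makes its output distributions on $\setz^{\lceil\sqrt n\rceil}$ $\epsilon$-indistinguishable across colors, one obtains $|Q_iV^m(\sete)-Q_jV^m(\sete)|\le\epsilon$ for \emph{all} pairs $(i,j)$ and all $\sete$ from a \emph{single} secrecy statement about the wiretap code --- no union bound over pairs is ever taken. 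Because the secure component has sublinear blocklength, any $C_S(W,V)>0$ suffices and the identification rate is still driven by the rate of the outer transmission code, giving $C_{SID}\ge C(W)$.
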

\subsection{Channel Model}
We consider the following standard GWC model:
\begin{equation}
\begin{aligned}
\Wg\colon y_i =x_i+\xi_i, &&  \forall i \in \{1,\ldots,n\}, \\
\Vg \colon z_i =x_i + \phi_i, && \forall i \in \{1,\ldots,n\},
 \end{aligned} \label{channelModel}
 \end{equation}
 where $x^n=(x_1,x_2,\ldots,x_n)$ is the channel input sequence. $y^n=(y_1,y_2,\ldots,y_n)$ and $z^n=(z_1,z_2,\ldots,z_n)$ are Bob's and Eve's observations, respectively. $\xi^n=(\xi_1,\xi_2,\ldots,\xi_n)$ and $\phi^n=(\phi_1,\phi_2,\ldots,\phi_n)$ are the noise sequences of the main channel and the wiretapper's channel, respectively.
 $\xi_i$ are i.i.d and drawn from a normal distribution denoted by $\text{g}$ with zero-mean and variance $\sigma^2$. $\phi_i$ are i.i.d and drawn from a normal distribution denoted by $\text{g}'$ with zero-mean and variance $\sigma'^2$. The channel input fulfills the following power constraint. \begin{equation} \frac{1}{n} \sum_{i=1}^{n} x_i^2 \leq  P. \label{powerConstr}
 \end{equation}The input set is $\setx_{n,P}$, defined in \eqref{inputConstrained}. 
The output sets are infinite $\sety=\setz= \mathbb{R}$. We denote the GWC by the pair $(W_{\text{g}},V_{\text{g}'})$, where $\Wg$ and $\Vg$ define the Gaussian channels to the legitimate receiver and the wiretapper, respectively.
 The strong secrecy capacity of the GWC is denoted by $C(\text{g},\text{g}',P)$. We mean with $C(\text{g},\text{g}',P)$ the Shannon capacity of the GWC when strong secrecy requirement \eqref{SecrecyReq} is fulfilled.
\begin{theorem} \cite{40,WieseBoche}
Let $\Wg \colon \mathbb{R} \to \mathbb{R}$ and $\Vg \colon \mathbb{R} \to \mathbb{R}$ be Gaussian channels with noise variances $\sigma^2$ and $\sigma'^2$, respectively. The strong secrecy capacity of the GWC $(\Wg,\Vg)$ with input power constraint $P$ is given by

 \begin{equation*}
 C_S(\text{g},\text{g}',P)=\begin{cases} \frac{1}{2} \log \left( \frac{1+\frac{P}{\sigma^2}}{ 1+\frac{P}{\sigma'^2}} \right)  & \text{if}\ \sigma^2 \geq \sigma'^2 \\
 0 & \text{else}. \end{cases}
 \end{equation*}
\end{theorem}
The definition of identification codes for the single-user MIMO channel is similar to the SISO case, except for the dimension of input and output sets.
Indeed, in each time $i \in \{1,\ldots,n\}$, we send $N_T$ scalar signals and receive $N_R$ signals. Thus, compared to the SISO case, the input and output sets contain matrices instead of vectors. 

\section{Identification for Gaussian Wiretap Channels} \label{sec: Identification for GWC}
In this section, we present and prove the main result of this paper. We first provide identification codes for the GWC and determine the corresponding secure identification capacity.
\begin{theorem} \label{main Theorem}
  Let $C_{SID}(\text{g},\text{g}^\prime,P)$ be the secure identification capacity of the wiretap channel $(W,V,\text{g},\text{g}^\prime,P)$ and $C(g,P)$ the identification capacity of the main Gaussian $W(\text{g},P)$ then:
  \begin{equation*}
  C_{SID}(\text{g},\text{g}^\prime,P)= \begin{cases} C(g,P) & \text{if } C_S(\text{g},\text{g}^\prime,P) > 0 \\
  0 & \text{if }C_S(\text{g},\text{g}^\prime,P)=0. \end{cases}
  \end{equation*}
 \end{theorem}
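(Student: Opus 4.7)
The plan is to establish the dichotomy by proving three pieces separately, mirroring the structure of the discrete Ahlswede--Zhang theorem but using the Gaussian-specific tools available in the paper.

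First, I would dispose of the upper bound $C_{SID}(\text{g},\text{g}',P)\le C(\text{g},P)$ at once. Any secure identification code is in particular an identification code for the main Gaussian channel $\Wg$ (just drop conditions involving $\Vg$ and $\sete$), so by Theorem~\ref{theorem:ID_Gaussian} (Burnashev's converse for the power-constrained Gaussian channel) its rate cannot exceed $C(\text{g},P)=\tfrac12\log(1+P/\sigma^2)$. This bound applies irrespective of whether $C_S>0$, and together with the two direct-part arguments it yields the dichotomy.

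Second, for the degenerate case $C_S(\text{g},\text{g}',P)=0$, I would show that the distinguishability requirement in Definition~\ref{IDwiretapCode} forces the rate to $0$. The argument is: if one had a secure identification code of positive rate with $N=2^{2^{nR}}$ identities satisfying the distinguishability criterion, then one could extract from it a wiretap transmission code of positive secrecy rate by the reduction used by Ahlswede--Zhang (grouping identities into pairs and noticing that the condition implies the two induced output distributions on $\setz^n$ under $\Vg^n$ are close in total variation, which in turn controls mutual information via Pinsker's inequality). This would contradict $C_S=0$. The continuous alphabet is not an obstacle here because the argument is entirely in terms of variational/Pinsker bounds on the output distributions.

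Third, and most substantively, for $C_S>0$ I would prove achievability $C_{SID}\ge C(\text{g},P)$ by adapting the two-layer Ahlswede--Zhang construction to the Gaussian setting. The plan is: (i) start from a standard randomized identification code for $\Wg$ of rate arbitrarily close to $C(\text{g},P)$, whose existence is guaranteed by the direct part of Theorem~\ref{theorem:ID_Gaussian} and which, as in the discrete case, can be built on top of a transmission code via the Ahlswede--Dueck coloring construction; (ii) invoke the semantic-security result of Wiese and Boche \cite{WieseBoche}, which supplies, for the Gaussian wiretap channel $(\Wg,\Vg)$, a randomized transmission code of positive rate whose output distribution at Eve is essentially independent of the transmitted message (strong/semantic secrecy with an exponentially decaying security index); (iii) superimpose the two by making each identity $i$ correspond to a pair $(a_i,b_i)$, where the $a_i$-component is carried by the main-channel identification code and the $b_i$-component is randomized through the Wiese--Boche secrecy code so that the mixture of wiretap output distributions is essentially the same for every identity, yielding a set $\sete\subset\sety^n$ that satisfies \eqref{Wcond}. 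The power constraint \eqref{PowerConst} must be respected throughout; I would enforce it by scaling the secrecy layer to consume only a vanishing fraction of the power budget, so the overall rate of the outer ID code stays arbitrarily close to $C(\text{g},P)$.

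The main obstacle will be step (iii): handling the wiretapper's continuous, infinite-alphabet observation. Unlike the discrete case of Ahlswede--Zhang, one cannot use type counting or cardinality bounds on $\setz^n$ to control the distinguishability set $\sete$, and Eve has no quantization limit. The key technical lever is the semantic security guarantee from \cite{WieseBoche}, which provides uniform total-variation closeness of the wiretap output distributions over all messages, so that Pinsker-type estimates translate into the $\sete$-criterion in \eqref{Wcond}. Reconciling this with the power constraint (so both the outer ID code and the inner secrecy code live inside $\setx_{n,P}$, for instance by a power-split $P=P_1+P_2$ with $P_2\downarrow 0$) and verifying that the overall rate still tends to $C(\text{g},P)$ is the main care point; once that is in place, the error probabilities \eqref{firstRequirement} and \eqref{secondRequirement} follow directly from the underlying main-channel identification code.
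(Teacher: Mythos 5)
Your three-part decomposition is exactly the paper's: the bound $C_{SID}(\text{g},\text{g}',P)\le C_{ID}(\text{g},P)=C(\text{g},P)$ by dropping the secrecy conditions and invoking Theorem~\ref{theorem:ID_Gaussian}; the $C_S=0$ case by contraposition via a two-identity reduction; and achievability via the Ahlswede--Zhang two-layer coloring. The differences are in execution, and two of them deserve comment. First, for the security of the direct part the paper does not work with \cite{WieseBoche} in the continuous domain: it quantizes $(\Wg,\Vg)$ \`a la Burnashev into a discrete wiretap channel $(\tW,\tV)$, imports the discrete stochastic-encoder bound $|\tilde{Q}_i\tV^{q}(\sete)-\tilde{Q}_j\tV^{q}(\sete)|\le\epsilon$, and transfers it back through the total-variation approximation \eqref{approximation}; your plan of invoking semantic security directly for the Gaussian wiretap channel is a legitimate alternative that avoids the quantization step, provided you check that total-variation closeness of Eve's output distributions over \emph{identities} (i.e.\ after mixing over the seed) is what \eqref{Wcond} requires. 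Second, the paper sidesteps your ``main care point'' entirely: it concatenates in \emph{time}, sending the seed $j$ over $n$ channel uses with an ordinary (non-secure) transmission code and only the color $T_i(j)$ over $\lceil\sqrt{n}\rceil$ uses with the wiretap code, so each segment uses the full power budget, no power split is needed, and there is no cross-layer interference to analyze; your simultaneous superposition with $P=P_1+P_2$ would force you to treat each layer as noise for the other and to re-derive the secrecy guarantee in the presence of the ID-layer interference, which your sketch leaves open. Two smaller cautions: your ``pair $(a_i,b_i)$'' must be the Ahlswede--Dueck pair (random seed, color of $i$ under that seed) --- the seed is independent of $i$ and may be sent in the clear, only the short color needs protection, which is precisely why the secrecy rate penalty vanishes; and in the $C_S=0$ converse you need, besides the Pinsker-type upper bound on $I(U;Z^m)$ from distinguishability (the paper's Lemma~\ref{zlemma}), the matching \emph{lower} bound $I(U;Y^m)\ge H_2\bigl(\tfrac12(1-2\lambda-2\delta')\bigr)$ coming from the reliability conditions (Lemma~\ref{ylemma}); only the gap between the two yields a positive secrecy rate.
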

As we mentioned, until now, it is not known whether secure identification can be reached for continuous channels. Here, we completely solve this problem for the GWC by proving Theorem \ref{main Theorem}. This theorem states that if the secrecy capacity of the GWC is positive, then the corresponding secure identification capacity coincides with the Shannon capacity of the main channel, i.e., the channel from the legitimate sender to the legitimate receiver. Thus, in this case, the channel to the wiretapper has no further influence on the secure identification capacity. For discrete wiretap channels, the alphabet of the wiretapper is discrete, i.e., the wiretapper is far more limited compared to the continuous case. It is advantageous for the wiretapper to deal with an infinite alphabet because he has no limit on the hardware resolution. This means that the received signal can be resolved with infinite accuracy. We assume that the eavesdropper has unlimited computational ability and no restrictions regarding quantization and digital hardware platform.
\subsection{Identification Codes for the GWC}
Based on the definitions of identification codes for the discrete wiretap channel in \cite{AhlZhang}, we introduce identification codes for the GWC.
\begin{definition} \label{IDwiretapCode}
 A randomized $(n,N,\lambda_1,\lambda_2)$ identification code for the GWC $(\Wg,\Vg)$ is a family of pairs $\{(Q(\cdot|i),\setd_i),\ i=1,\dots,N\}$ with \begin{align*} Q(\cdot|i) \in \mathcal P \left (\setx_{n,P}\right),\mathcal D_{i} \subset {\mathcal Y} ^{n}, && \forall ~i\in \{1,\ldots,N\} \end{align*}
such that for all $i,j \in \{1,\ldots,N\}$, $i \neq j$ and any $\sete \in \setz^n$
\begin{align}
\int_{x^n \in \setx_{n,P}} Q(x^n|i) \Wg^n(\setd_i^c|x^n) d^n x^n &  \leq\lambda_1,  \label{firstRequirement} \\
 \int_{x^n \in \setx_{n,P}} Q(x^n|j) \Wg^n(\setd_i|x^n) d^n x^n &\leq\lambda_2, \label{secondRequirement}\\
\int_{x^n \in \setx_{n,P}}  Q(x^n|j) \Vg^n(\sete|x^n) d^n x^n \notag \\ \qquad \,\,\,+\  \int_{x^n \in \setx_{n,P}} Q(x^n|i) \Vg^n(\sete^c|x^n) d^n x^n & \geq 1-\lambda.
\end{align}
\end{definition}
 
\subsection{Optimal Secure Coding Scheme for Identification}
\begin{figure}[!t]
\centering \input{figures/CodeConstrWiretap}
\caption{Identification wiretap code construction flowchart.}
\label{CodeConstruction}
\end{figure}
The idea of the direct proof is to concatenate two fundamental codes. We consider a transmission code $\Cp$ and a wiretap code $\Cpp$  as depicted in Fig. \ref{CodeConstruction}. 
For the message set $\{1,\ldots,\Mp\}$ one uses $\{1,\ldots,\Mpp\}$ as a suitable indexed set of colorings of the messages with a smaller number of colors. Both of the coloring and color sets are known to the sender and the receiver(s). Every coloring function, denoted by $T_i \colon \Cp \longrightarrow \Cpp$, corresponds to an identification message $i$. The sender chooses a coloring number $j$ randomly from the set $\{1,\ldots,\Mp\}$ and calculates the color of the identification message $i$ under coloring number $j$ using $T_i$, denoted by $T_i(j)$. We send both of $j$ with the code $\Cp$ and $T_i(j)$ with the code $\Cpp$ over the GWC. 
The receiver, interested in the identification message $i$, calculates the color of $j$ under $T_i$
and checks whether it is equal to the received color or not. In the first case, he decides that the identification message is $i$, otherwise he says it was not $i$.
For notational simplicity, we set $ \lceil\sqrt{n} \rceil=q$. The resulting $(m,N,\lambda_1,\lambda_2)$ identification code $\C=\{(Q(\cdot|i),\setd_i),\quad Q(\cdot|i) \in \mathcal{P}(\setx_{m,P}),\ \setd_i \subset \sety^m,\ i\in \{1,\ldots,N\} \}$ has blocklength $m=n+q$. The input set $\setx_{m,P}$ is defined analogously to \eqref{inputConstrained},  $$\setx_{m,P}=\left\{ x^m \in \setx^m \subset \mathbb{R}^m \colon \sum_{i=1}^{m} x_i^2 \leq m \cdot P \right\}.$$ 
 We want to show that
\begin{equation} {C_{SID}(\text{g},\text{g}',P)\geq C(\text{g},P)\quad \text{if } C_S(\text{g},\text{g}',P) > 0}. \end{equation}
 As a wiretap code for $(\Wg,\Vg)$ is also a transmission code, the bounds for the errors of the first kind and of the second kind can be computed similarly to the Gaussian case (see \cite{MasterThesis}). It is obvious that the concatenated code fulfills the power constraint. Now, it remains to show that \eqref{Wcond} holds. For this purpose, it is sufficient to prove that the wiretapper can not identify the color.
We apply the results in \cite{AhlZhang} to the Gaussian case.
If $Q_i$ is an input distribution then the generated output measure over the channel $\Vg$ denoted by $Q_i\Vg^{q}(\sete)$ is defined as the following:
\begin{align}
Q_i\Vg^{q}(\sete)& \triangleq \sum_{x^{q} \in \setx^{q}} Q(x^{q}|i) \Vg^{q}(\sete|x^{q}).
\end{align}
We use a stochastic encoder. For any region $\sete \in \setz^{q}$ and any $i\neq j$, the total variation distance between two output measures $Q_iV^{q}(\sete)$ and $Q_jV^{q}(\sete)$ is upper-bounded. Indeed, as shown in \cite{AhlZhang}, from the transmission wiretap code we can construct a transmission code satisfying the following inequality:
\begin{equation}
d\left(Q_i\Vg^{q}(\sete), Q_j\Vg^{q}(\sete)\right) \leq \epsilon. \label{4.19}
\end{equation} 
In case of the GWC, we define $Q_i\Vg^{q}(\sete)$ as follows:
\begin{equation}
Q_i\Vg^{q}(\sete)\triangleq \int_{x^{q} \in \setx^{q}} Q(x^{q}|i) \Vg^{q}(\sete|x^{q}) d^qx^q.
\end{equation}
In spirit of Burnashev's approximation \cite{Bur00}, we approximate the GWC $(\Wg,\Vg)$ by the discrete GWC $(\tW,\tV)$, where the Gaussian channels $\Wg$ and $\Vg$ are approximated by the discrete channels $\tilde{\Wg} \colon \mathcal{L}_x \to \setl_y $ and $\tilde{\Vg} \colon \mathcal{L}_x \to \setl_z$, respectively. The sets $\mathcal{L}_x $, $\setl_y$ and $\setl_z$ are finite.
 A detailed description of the quantization of the GWC is presented in Appendix \ref{QuantizationWiretapAppendix}.
 
 If $Q_i\Vg^{q}(\cdot)$ is an output measure on $\Vg$, we denote its approximation on $\tV$ by $\tilde{Q}_i\tV^{q}(\cdot)$. It was shown in \cite{Bur00} that: 
\begin{equation}
d\left(Q_i\Vg^{q}(\cdot),\tilde{Q}_i\tV^{q}(\cdot)\right) \leq \delta,\quad \delta>0. \label{approximation}
\end{equation}
We now use the stochastic encoder described in \cite{CompoundChannel} for the new quantized GWC $(\tW,\tV)$. We obtain for any region $\sete \in \setz^{q}$:
\begin{equation}
d\left(\tilde{Q}_i\tV^{q}(\sete),\tilde{Q}_j\tV^{q}(\sete)\right) \leq \epsilon. \label{quantizedBound}
\end{equation}
It follows from \eqref{approximation} and \eqref{quantizedBound} that:
\begin{equation}
d\left({Q}_i\Vg^{m}(\sete),{Q}_j\tV^{m}(\sete)\right) \leq \epsilon+2\delta. \label{newbound}
\end{equation}
We choose $\delta$ small enough such that $\epsilon+2\delta = \epsilon_1$ is very small. It is clear that \eqref{newbound} implies the last secrecy requirement of an identification code for the GWC $(\Wg,\Vg)$. This completes the direct proof. \hfill\ensuremath{\square}
\subsection{Characterization of Optimal Rate}
We need the following lemmas to prove the main results of the converse part.
\begin{lemma}\label{zlemma}
 	Let $Q_{z,i}$ and $Q_{z,j}$ be two distributions on $\setz^m$ and for any Lebesgue measurable $\sete \subset \setz^m$,
 	\begin{equation}
 	Q_{z,i}(\sete)+Q_{z,j}(\sete^c) > 1- \epsilon, \quad 0<\epsilon<\frac{2}{x}
 	\end{equation}
 	and let $U$ be a binary RV with uniform distribution on $\{i,j\}$ and $V^m(z^m|U=i)=Q_{z,i}$, $V^m(z^m|U=j)=Q_{z,j}$ then
 	\begin{equation}
 	I(U;Z^m) \leq \inf_{x \in (0,\frac{2}{\epsilon})} \frac{2}{x}+\log \frac{1}{1-\frac{1}{2}x\epsilon} .
 	\end{equation}
 \end{lemma}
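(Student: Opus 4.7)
The plan is first to reduce the hypothesis to a total variation bound and then to control $I(U;Z^m)$ by a truncation of the likelihood ratio. Observe that
$Q_{z,i}(\sete) + Q_{z,j}(\sete^c) = 1 - (Q_{z,j}(\sete) - Q_{z,i}(\sete))$,
so the hypothesis that this exceeds $1-\epsilon$ for every measurable $\sete$ (and hence, by taking complements, for $\sete^c$ as well) is equivalent to the strong assertion $d(Q_{z,i}, Q_{z,j}) < \epsilon$. This TV bound is the only property of the two distributions that I would use; the rest is a general information-theoretic estimate.

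Next, I would use the mixture-form identity $I(U;Z^m) = \tfrac{1}{2}[D(Q_{z,i}\|\bar{Q}) + D(Q_{z,j}\|\bar{Q})]$, where $\bar{Q} = \tfrac{1}{2}(Q_{z,i} + Q_{z,j})$ is the marginal of $Z^m$; write $f_i, f_j, \bar{f}$ for densities with respect to any common dominating measure. The heart of the argument is a truncation parameterized by $x \in (0, 2/\epsilon)$. I would introduce the \emph{ambiguous set}
$\mathcal{A}_x := \{z \in \setz^m : f_i(z)/\bar{f}(z) \leq 1 + 2/x \text{ and } f_j(z)/\bar{f}(z) \leq 1 + 2/x\}$.
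On $\mathcal{A}_x$ the log-density ratio is bounded, so the contribution of $\mathcal{A}_x$ to each $D(Q_{z,\cdot}\|\bar{Q})$ is at most $2/x$. On $\mathcal{A}_x^c$, the inequality $f_i(z) > (1+2/x)\bar{f}(z)$ (or the analogous one for $f_j$) forces $|f_i(z) - f_j(z)| > (4/x)\bar{f}(z)$ pointwise; combined with $\int |f_i - f_j| < 2\epsilon$ (the TV bound), Markov's inequality then gives $\bar{Q}(\mathcal{A}_x^c) < x\epsilon/2$. Applying the chain rule of relative entropy to the partition $\{\mathcal{A}_x, \mathcal{A}_x^c\}$ produces a $-\log \bar{Q}(\mathcal{A}_x) = \log(1/(1-x\epsilon/2))$ term from the induced binary divergences, while the conditional divergence on $\mathcal{A}_x^c$ can be bounded using the trivial estimate $\log(f_i/\bar{f}) \leq 1$ (since $f_i \leq 2\bar{f}$). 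Summing the two contributions and taking the infimum over $x \in (0, 2/\epsilon)$ yields the claimed bound.

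The main technical hurdle is producing the precise form $\log(1/(1-x\epsilon/2))$ for the exceptional contribution rather than a coarser linear estimate such as $x\epsilon/2$: this is why the KL chain rule (which automatically introduces the $-\log \bar{Q}(\mathcal{A}_x)$ factor) must be used in place of a naive splitting of the integral, and why the threshold $1 + 2/x$ defining $\mathcal{A}_x$ has to be tied \emph{directly} to the same parameter $x$ appearing in the $2/x$ term, so that the truncation and measure estimates combine cleanly without introducing an extra free parameter.
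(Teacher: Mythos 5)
The paper does not actually prove this lemma: it states it and defers to the discrete-channel proof in Ahlswede--Zhang, merely asserting that "the results can be extended for Gaussian channels." So there is no in-paper proof to compare against, and your decision to argue directly with densities on $\setz^m$ (which makes the continuous extension automatic) is a genuine contribution. Your first two steps are also exactly right: the hypothesis, applied to $\sete$ and $\sete^c$, is equivalent to $d(Q_{z,i},Q_{z,j})\le\epsilon$, i.e.\ $\int|f_i-f_j|\le 2\epsilon$, and $I(U;Z^m)=\tfrac12[D(Q_{z,i}\|\bar{Q})+D(Q_{z,j}\|\bar{Q})]$ with $\bar{Q}=\tfrac12(Q_{z,i}+Q_{z,j})$. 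The truncation-of-the-likelihood-ratio strategy is also the right one.

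The gap is in the bookkeeping of the truncation: with your choice of $\mathcal{A}_x$ the two terms of the bound do not come out as claimed. First, on $\mathcal{A}_x$ the contribution to $D(Q_{z,i}\|\bar{Q})$ is at most $\log_2(1+2/x)$, and since the paper takes logarithms to base $2$, $\log_2(1+2/x)>2/x$ precisely when $x>2$ --- which is the regime where the infimum is attained (the optimizing $x$ is of order $\epsilon^{-1/2}$). Second, the chain rule for relative entropy over the partition $\{\mathcal{A}_x,\mathcal{A}_x^c\}$ is an identity, so it cannot by itself "produce" the clean term $\log\frac{1}{1-x\epsilon/2}$: bounding its pieces leaves you with the binary divergence $Q_{z,i}(\mathcal{A}_x)\log\frac{Q_{z,i}(\mathcal{A}_x)}{\bar{Q}(\mathcal{A}_x)}+Q_{z,i}(\mathcal{A}_x^c)\log\frac{Q_{z,i}(\mathcal{A}_x^c)}{\bar{Q}(\mathcal{A}_x^c)}$ plus the conditional divergence on $\mathcal{A}_x^c$, and these contribute additional positive terms of order $Q_{z,i}(\mathcal{A}_x^c)\le x\epsilon$ that are not absorbed by $\log\frac{1}{1-x\epsilon/2}$ (note $\log_2\frac{1}{1-x\epsilon/2}<x\epsilon$ for small $x\epsilon$). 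What your argument actually yields is $I(U;Z^m)\le\log(1+2/x)+O(x\epsilon)$ --- qualitatively of the right form, and sufficient for the converse in Section~III (only $I(U;Z^m)\to 0$ as $\epsilon\to 0$ is needed there), but strictly weaker than the stated bound. The fix is to swap the roles of the two terms: for $u\in\{i,j\}$ with $u'$ the other index, truncate on $G_u=\{z:f_{u'}(z)\ge(1-x\epsilon)f_u(z)\}$. On $G_u$ one has $f_u/\bar{f}\le\frac{2}{2-x\epsilon}=\frac{1}{1-x\epsilon/2}$, so that set contributes at most $\log\frac{1}{1-x\epsilon/2}$; on $G_u^c$ one has $f_u-f_{u'}>x\epsilon f_u$, whence $Q_{z,u}(G_u^c)\le\frac{1}{x\epsilon}\int|f_u-f_{u'}|\le\frac{2}{x}$, and since $f_u/\bar{f}\le 2$ everywhere that set contributes at most $\frac{2}{x}\log 2=\frac{2}{x}$. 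A plain splitting of the integral (no chain rule needed) then gives exactly $D(Q_{z,u}\|\bar{Q})\le\frac{2}{x}+\log\frac{1}{1-x\epsilon/2}$ for each $u$, hence the lemma.
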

 \begin{lemma} \label{ylemma}
 	Let $Q_{y,i}$ and $Q_{y,j}$ be two distributions on $\sety^m$ for which there exists a $\setd \subset \sety^m$ such that
 	\begin{equation}
 	Q_{y,i}(\setd) + Q_{y,j}(\setd^c) < \epsilon,
 	\end{equation}
 	and let $U$ be a binary RV with uniform distribution and  $W^m(y^m|U=i)=Q_{y,i}$, $W^m(y^m|U=j)=Q_{y,j}$ then
 	\begin{equation}
 	I(U;{Y}^m) \geq \mathbb{H}\left(\frac{1}{2}(1-\epsilon) \right).
 	\end{equation}

 \end{lemma}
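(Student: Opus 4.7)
The plan is to reduce the claim to a binary-estimator bound using the set $\setd$ as a hypothesis-test statistic. I would define the deterministic decoder $\hat{U} \triangleq i$ if $Y^m \in \setd^c$ and $\hat{U} \triangleq j$ if $Y^m \in \setd$. Since $\hat{U}$ is a measurable function of $Y^m$, the Markov chain $U \to Y^m \to \hat{U}$ holds and data processing yields $I(U;Y^m) \geq I(U;\hat{U})$. It therefore suffices to lower-bound the mutual information between two binary random variables, which is an elementary computation.

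The next step is to translate the hypothesis on $\setd$ into an error-probability bound. Writing $a = Q_{y,i}(\setd)$ and $b = Q_{y,j}(\setd^c)$, the assumption reads $a+b < \epsilon$ with $a,b \geq 0$, and the average error probability of $\hat{U}$ is $P_e = \tfrac{1}{2}(a+b) < \tfrac{\epsilon}{2}$. I would then compute the two auxiliary entropies driving the bound. First, the marginal $\Pr(\hat{U}=i) = \tfrac{1}{2}(1-a+b)$ lies in $[(1-\epsilon)/2,(1+\epsilon)/2]$ because $|a-b|<\epsilon$, so by symmetry and unimodality of $H_2$ about $1/2$ we obtain $H(\hat{U}) \geq H_2\bigl(\tfrac{1-\epsilon}{2}\bigr)$. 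Second, by concavity of $H_2$, $H(\hat{U}\mid U) = \tfrac{1}{2}[H_2(a)+H_2(b)] \leq H_2\bigl(\tfrac{a+b}{2}\bigr) \leq H_2(\epsilon/2)$.

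To finish, combine these ingredients via $I(U;\hat{U}) = H(\hat{U}) - H(\hat{U}\mid U)$, or equivalently apply Fano's inequality $H(U\mid \hat{U}) \leq H_2(P_e) \leq H_2(\epsilon/2)$ to the binary-input/binary-output pair $(U,\hat{U})$, giving $I(U;\hat{U}) \geq 1 - H_2(\epsilon/2)$; then chain with the data-processing step. The main obstacle I expect is landing exactly on the stated form $H_2\bigl(\tfrac{1-\epsilon}{2}\bigr)$: the direct Fano route most naturally produces the shape $1 - H_2(\epsilon/2)$, and reaching the claimed expression requires either invoking the marginal bound $H(\hat{U}) \geq H_2\bigl(\tfrac{1-\epsilon}{2}\bigr)$ together with a careful two-by-two entropy accounting that absorbs the $H(\hat{U}\mid U)$ term using the strict inequality $a+b<\epsilon$, or a slight reparametrization of the decoding rule that exploits the full structure of the hypothesis on $\setd$. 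The overall strategy, however, is the standard binary-reduction/data-processing/Fano chain, and the quantitative step is purely an entropy calculation on the induced $2\times 2$ joint distribution.
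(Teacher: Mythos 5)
Your reduction is the right one, and it works directly on the continuous alphabet: the estimator $\hat U=i$ on $\setd^c$, $\hat U=j$ on $\setd$ is a measurable function of $Y^m$, so $I(U;Y^m)\ge I(U;\hat U)$ with no quantization needed. (The paper itself does not prove this lemma; it cites Ahlswede--Zhang for the DMC case and passes to the Gaussian case by quantizing the output and invoking data processing, so your argument is actually more self-contained.) Your Fano chain is also correct: with $a=Q_{y,i}(\setd)$, $b=Q_{y,j}(\setd^c)$ and $P_e=\tfrac12(a+b)<\tfrac{\epsilon}{2}$, binary Fano gives $I(U;Y^m)\ge 1-H_2(P_e)\ge 1-H_2(\epsilon/2)$ for $\epsilon\le 1$.

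The obstacle you flag at the end is not a defect of your proof: the printed bound $H_2\bigl(\tfrac12(1-\epsilon)\bigr)$ cannot be reached because it is false as stated. Take $Y^m$ effectively binary with $Q_{y,i}=(3/4,1/4)$, $Q_{y,j}=(1/4,3/4)$, $\setd$ the second atom, and $\epsilon$ slightly above $1/2$: the hypothesis holds with $a=b=1/4$, yet $I(U;Y^m)=1-H_2(1/4)\approx 0.19$, whereas $H_2\bigl(\tfrac12(1-\epsilon)\bigr)\approx 0.80$. The same comparison fails for every $\epsilon\in(0,1)$ by taking $a=b$ close to $\epsilon/2$, since $1-H_2(\epsilon/2)<H_2\bigl(\tfrac12(1-\epsilon)\bigr)$ on that whole range. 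So do not attempt to \emph{absorb} the conditional entropy term: $H(\hat U\mid U)=\tfrac12\bigl(H_2(a)+H_2(b)\bigr)$ can be arbitrarily close to $H_2(\epsilon/2)$ and genuinely eats into $H(\hat U)\ge H_2\bigl(\tfrac12(1-\epsilon)\bigr)$; the exact two-by-two computation gives $I(U;\hat U)=H_2\bigl(\tfrac12(1+a-b)\bigr)-\tfrac12\bigl(H_2(a)+H_2(b)\bigr)$, which is bounded below by $1-H_2(\epsilon/2)$ but not by $H_2\bigl(\tfrac12(1-\epsilon)\bigr)$. The correct conclusion is $I(U;Y^m)\ge 1-H_2(\epsilon/2)$, which is exactly what your argument proves, and it is all that the converse needs: there $\epsilon=2\lambda+2\delta'$ is small, so $1-H_2(\lambda+\delta')$ is bounded away from zero and still dominates the vanishing upper bound on $I(U;Z^m)$ from Lemma~\ref{zlemma}. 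State and prove that corrected version.
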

 Lemma \ref{zlemma} and Lemma \ref{ylemma} were proved in \cite{AhlZhang}. 
 To prove the converse part, we have to show the two following statements.
\begin{equation}
\begin{cases}
C_{SID}(\text{g},\text{g}',P) \leq C(g,P) & \text{if } C_S(\text{g},\text{g}^\prime,P) > 0, \\
 C_{SID}(\text{g},\text{g}',P)=0 & \text{if }C_S(\text{g},\text{g}^\prime,P)=0.
\end{cases}
\end{equation}
It is obvious that the secure identification capacity can not exceed the identification capacity of the channel, i.e.,
\begin{equation} C_{SID}(\text{g},\text{g}^\prime,P) \leq C_{ID}(\text{g},P)=C(\text{g},P).\end{equation}
It remains to show that
\begin{equation}
C_S(\text{g},\text{g}^\prime,P)=0  \Longrightarrow C_{SID}(\text{g},\text{g}^\prime,P) =0. 
\end{equation}
For convenience, we show the following equivalent statement
\begin{equation}
 \left(C_{SID}(\text{g},\text{g}^\prime,P) >0 \right)  \Longrightarrow \left(C_S(\text{g},\text{g}^\prime,P)>0 \right).
\end{equation}
 We assume that the secure identification capacity is positive. This implies the existence of a wiretap identification code $(m,N,\lambda,\lambda)$ with positive rate for $(\Wg,\Vg)$ (see \emph{Definition \ref{IDwiretapCode}}). For convenience, we chose $\lambda_1=\lambda_2=\lambda$ for $\lambda < \frac{1}{2}$.
 From \eqref{firstRequirement} and \eqref{secondRequirement}, we obtain
\begin{align}
\int_{x^m \in \setx_{m,P}} Q(x^m|i) \Wg^m(\setd_i^c|x^m) d^m x^m  \nonumber \\ + \int_{x^m \in \setx_{m,P}} Q(x^m|j) \Wg^m(\setd_i|x^m) d^m x^m \leq 2\lambda. \label{TwofirstProp}
\end{align}
Let $U$ be a binary RV with uniform distribution on the set $\{i,j\}$. We denote the probability $\Wg^m(y^m|U=i)$ by $Q_{y,i}$ and $\Wg^m(y^m|U=j)$ by $Q_{y,j}$, i.e.,
\eqref{TwofirstProp} can be rewritten as the following:
\begin{equation}
Q_{y,i}(\setd^c) + Q_{y,j}(\setd) \leq 2\lambda. \label{Twop}
\end{equation}
We now compute $I(U;Y^m)$ as follows:
{\small{\begin{align}
&I(U;Y^m)  \\
&= \int_{\substack{y^m \in \sety^m,\\ u\in\{i,j\} }} P_{UY^m}(u,y^m) \log \frac{P_{UY^m}(u,y^m)}{P_U(u)\cdot P_{Y^m}(y^m)} \ du \ d^my^m \\
& = \int_{\substack{y^m \in \sety^m,\\ u\in\{i,j\} }} \Wg^m(y^m|u) P_U(u) \log \frac{\Wg^m(y^m|u)}{P_{Y^m}(y^m)} \ du \ d^my^m \\
&\overset{(a)}{=} \frac{1}{2} \int_{\substack{y^m \in \sety^m}}  \Wg^m(y^m|i) \log \frac{\Wg^m(y^m|i)}{\frac{1}{2}(\Wg^m(y^m|i)+\Wg^m(y^m|j))}  \\
&+  \Wg^m(y^m|j) \log \frac{\Wg^m(y^m|j)}{\frac{1}{2}(\Wg^m(y^m|i)+W^m(y^m|j))} \ d^my^m\nonumber \\
&\overset{(b)}{=}\int_{\substack{y^m \in \sety^m}} \frac{1}{2} Q_{i,y}(y^m) \log \frac{2Q_{i,y}(y^m)}{(Q_{i,y}(y^m)+Q_{j,y}(y^m))} \  d^my^m \\
& + \int_{\substack{y^m \in \sety^m}} \frac{1}{2} Q_{j,y}(y^m) \log \frac{2Q_{j,y}(y^m)}{(Q_{i,y}(y^m)+Q_{j,y}(y^m))} \  d^my^m. \nonumber
\end{align}}}
$(a)$ follows by the law of total probability. $(b)$ follows by the definition of $Q_{i,y}$. \\
We want to establish a lower bound for $I(U;Y^m)$. First, let $\tilde{Y}$ on $\setl_y$ be the quantized version of $Y$. That means, if $\pi(x^n) \in \mathcal{P}(\setx_{n,P})$ is an input distribution generating the output measures $Q_{i,y}(\cdot)$ and $\tilde{Q}_{i,y}(\cdot)$ on the channels $\Wg$ and $\tilde{\Wg}$, respectively we have then
\begin{equation}
d\left(Q_{y,i}(\cdot),\tilde{Q}_{y,i}(\cdot)\right) \leq \delta^\prime, \quad \delta^\prime>0. \label{newQ}
\end{equation}
From \eqref{newQ} and \eqref{Twop}, we obtain
\begin{equation}
\tilde{Q}_{y,i}(\setd^c) + \tilde{Q}_{y,j}(\setd) \leq 2\lambda+2\delta^\prime,\quad \delta^\prime>0.
\end{equation}

We then compute $I(U;\tilde{Y}^m)$ as follows:
{{\begin{align}
I(U;\tilde{Y}^m) &= \sum_{\substack{y^m \in \mathcal{L}_y^m}} \frac{1}{2} \tilde{Q}_{i,y}(y^m) \log \frac{2 \tilde{Q}_{i,y}(y^m)}{(\tilde{Q}_{i,y}(y^m)+\tilde{Q}_{j,y}(y^m))}  \\
& + \sum_{\substack{y^m \in \mathcal{L}_y^m}} \frac{1}{2} \tilde{Q}_{j,y}(y^m) \log \frac{2\tilde{Q}_{j,y}(y^m)}{(\tilde{Q}_{i,y}(y^m)+\tilde{Q}_{j,y}(y^m))}.  \nonumber
\end{align}}}
By applying Lemma \ref{ylemma} on $\tilde{Q}_{y,i}(\setd^c)$ and $\tilde{Q}_{y,j}(\setd)$, we obtain 
\begin{equation}
I(U;\tilde{Y}^m) \geq \mathbb{H}\left(\frac{1}{2}(1-2\lambda-2\delta^\prime) \right).
\end{equation}
Therefore, it follows by data processing inequality
\begin{equation}
I(U;Y^m) \geq I(U;\tilde{Y}^m) \geq \mathbb{H}\left(\frac{1}{2}(1-2\lambda-2\delta^\prime) \right). \label{boxed1}
\end{equation}
 For the same RV $U$, we define $Q_{z,i}$ as the probability $\Vg^m(z^m|U=i)$ and $Q_{z,j}$ as the probability $\Vg^m(z^m|U=j)$. Then, in terms of $Q_{z,j}$ and $Q_{z,i}$, the last condition \eqref{Wcond} of the identification wiretap code can be rewritten as follows.
\begin{equation}
Q_{z,j}(\sete)+Q_{z,i}(\sete^c) \geq 1-\lambda,\quad i\neq j,\ \forall \sete \subset \setz^m.
\end{equation}
Lemma \ref{zlemma} implies that
\begin{equation}
I(U;Z^m) \leq \inf_{x >0} \left(\frac{2}{x}+\log \frac{1}{1-\frac{1}{2}x\lambda} \right).\label{boxed2}
\end{equation}
 For $\lambda$ and $\delta$ small enough, it follows from \eqref{boxed1} and \eqref{boxed2} that there exists an index $n_0$ and a RV $U$ with $U \longrightarrow X^{n_0}\longrightarrow Y^{n_0}Z^{n_0}$ such that
\begin{equation} I(U;Y^{n_0}) > I(U;Z^{n_0 }). \end{equation}
That means that we have a code with a positive secrecy rate i.e.,
 $$C_{SID}(\text{g},\text{g}^\prime,P) >0   \Longrightarrow C_S(\text{g},\text{g}^\prime,P)>0. $$ This completes the converse part.\hfill\ensuremath{\square} 
\section{Identification for the Gaussian MIMO Channel}
\label{sec: Identification for MIMO}
In a point-to-point or single-user MIMO communication system, the transmitted signal is a joint transformation of data signals from multiple transmit antennas. The entries of the output signal, collected by multiple receive antennas, are jointly processed. Single-user MIMO communication systems, compared to Single-Input Single-Output (SISO) systems, offer higher rates, more reliability and resistance to interference. In this section, we focus on identification over the Gaussian single-user MIMO channel and compute the corresponding identification capacity. 
\subsection{Channel Model}
The channel model is depicted in Fig. \ref{MIMOModel}.
\begin{figure}[hbt!]
	\centering 
	\tikzstyle{farbverlauf} = [ top color=white, bottom color=white!80!gray]
\tikzstyle{vectorarrow} = [thick, decoration={markings,mark=at position
   1 with {\arrow[semithick]{open triangle 60}}},
   double distance=1.4pt, shorten >= 5.5pt,
   preaction = {decorate},
   postaction = {draw,line width=1.4pt, white,shorten >= 4.5pt}]
\tikzstyle{block} = [draw,top color=white, bottom color=white!80!gray, rectangle, rounded corners,
minimum height=2em, minimum width=2.5em]
\tikzstyle{input} = [coordinate]
\tikzstyle{sum} = [draw, circle,inner sep=0pt, minimum size=2mm,  thick]
\tikzstyle{arrow}=[draw,->] 
\begin{tikzpicture}[auto, node distance=2cm,>=latex']
\node[] (x) {\large$\boldsymbol{x}$};
\node[block,right=1cm of x] (channel) {\large$\mathbf{H}$};
\node[sum, right=1cm of channel] (sum) {$+$};
\node[right=1cm of sum] (y) {\large$\boldsymbol{y}$};
\node[above=.5cm of sum] (noise) {$\boldsymbol{\xi}$};
\draw[vectorarrow] (x) -- (channel);
\draw[vectorarrow] (channel) -- (sum);
\draw[vectorarrow] (noise) -- (sum);
\draw[vectorarrow] (sum) -- (y);
\end{tikzpicture}
	\caption{The Gaussian single-user MIMO channel model.}
	\label{MIMOModel}
\end{figure}
We consider the following channel model with $N_T$ transmit antennas and  $N_R$ receive antennas:
	\begin{equation} \bs{y}_i=\mathbf{H}\bs{x}_i+\bs{\xi}_i,\quad \forall i=1,\ldots,n,
	\label{Model:MIMO}\end{equation}
where $n$, as previously mentioned, is the number of channel uses. For simplicity, we drop the index $i$.	The input vector $\boldsymbol{x} \in \Cn $ contains the $N_T$ scalar transmitted signals and fulfills the following power constraint:
\begin{equation}
    \mathbb{E}[\bs{x}^\h \bs{x}] \leq P.
\end{equation}
The output vector $\bs{y} \in \Cr$
comprises the scalar received signals of the $N_R$ channel outputs. The channel matrix
	\begin{equation*} \mathbf{H}= \begin{pmatrix} h_{11} & \ldots& h_{1N_T} \\ 
	\vdots & \ddots & \vdots \\  h_{N_R1} & \ldots & h_{N_RN_T}
	\end{pmatrix}\in \mathbb{C}^{N_R \times N_T} \end{equation*}
	is a full-rank matrix.
	The entry $h_{ij}$ represents the channel gain from transmit antenna $j$ to receive antenna $i$. 
The vector $\bs{\xi} \in \Cr$ is circularly symmetric Gaussian noise, $\bs{\xi} \sim \mathcal{N}_{\mathbb{C}}(\bs{0}_{N_T},\sigma^2 \mathbf{I}_{N_R})$.
If $\mathbf{H}$ is deterministic and perfectly known at the transmitter and the receiver, then the capacity $C(P,N_T\times N_R)$ of the MIMO channel described in \eqref{Model:MIMO} is given by \cite{Tse}:
	\begin{equation*}
	C(P,N_T\times N_R)= \max_{\mathbf{Q}:\ \substack{tr(\mathbf{Q})=P\\ Q \succeq 0}} \log \det \left(\mathbf{I}_{N_R}+\frac{1}{\sigma^2}\mathbf{H} \mathbf{Q} \mathbf{H}^\h \right). \label{eq:capacity_MIMO}
	\end{equation*}
	where $\mathbf{Q} \in \mathbb{C}^{N_T\times N_T}$ is the covariance matrix of the input vector $\bs{x}$.
	\subsection{Shannon Capacity of the Gaussian MIMO Channel}
	The capacity $C(P,N_T\times N_R)$ can be computed by converting the MIMO channel into parallel, independent and scalar Gaussian sub-channels. This conversion is based on the following singular value decomposition (SVD) of the channel matrix $\mathbf{H}$:
	\begin{equation}
	   \mathbf{H}=\mathbf{U}\mathbf{\Lambda}\mathbf{V}^\h,
	\end{equation}
where $\mathbf{U} \in \mathbb{C}^{N_R\times N_R}$ and $\mathbf{V} \in \mathbb{C}^{N_T\times N_T}$ are unitary matrices. $\mathbf{\Lambda} \in \mathbb{C}^{N_R\times N_T} $ is a diagonal matrix, whose diagonal elements $\lambda_1\geq \lambda_2 \geq \cdots \geq \lambda_N$ are the ordered singular values of the channel matrix $\mathbf{H}$. We denote with $N$ the rank of $\mathbf{H}$, $N \coloneqq  \min(N_T,N_R)$. If we multiply \eqref{Model:MIMO} with the unitary matrix $\mathbf{U}^\h$, we obtain then:
\begin{equation}
\underbrace{\mathbf{U}^\h \bs{y}}_{\coloneqq \tilde{\bs{y}}}= \mathbf{U}^\h\mathbf{U}\mathbf{\Lambda}\underbrace{\mathbf{V}^\h \bs{x}}_{\coloneqq \tilde{\bs{x}}}+\underbrace{\mathbf{U}^\h\bs{\xi}}_{\coloneqq \tilde{\bs{\xi}}}.
\end{equation}
The RV $\tilde{\bs{\xi}}$ has the same distribution as $\bs{\xi}$, i.e., $\tilde{\bs{\xi}} \sim \mathcal{N}_{\mathbb{C}}(\bs{0}_{N_T},\sigma^2 \mathbf{I}_{N_R})$ and we have:
\begin{equation*}
\mathbb{E}[\tilde{\bs{x}}^\h\tilde{\bs{x}}]= \mathbb{E}[ \bs{x}^\h \mathbf{V} \mathbf{V}^\h \bs{x}]=  \mathbb{E}[\bs{x}^\h \bs{x}].
\end{equation*}
Thus, we obtain the $N$ independent scalar Gaussian channels depicted in Fig. \ref{Fig:SVD_Dec}.
\begin{figure*}
	\centering 
	\tikzstyle{vecArrow} = [thick, decoration={markings,mark=at position
   1 with {\arrow[semithick]{open triangle 60}}},
   double distance=1.4pt, shorten >= 5.5pt,
   preaction = {decorate},
   postaction = {draw,line width=1.4pt, white,shorten >= 4.5pt}]
\tikzstyle{block} = [draw, top color=white, bottom color=white!80!gray, rectangle, rounded corners,
minimum height=2em, minimum width=2cm]
\tikzstyle{blockSVD} = [draw, top color=white, bottom color=white!80!gray, rectangle, rounded corners,
minimum height=3cm, minimum width=.9cm]
\tikzstyle{input} = [coordinate]
\tikzstyle{sum} = [draw, circle,inner sep=1pt, minimum size=2mm, thick]
\usetikzlibrary{arrows}
\begin{tikzpicture}[scale= 1,font=\footnotesize]
\node[] (xtilde) {\large$\boldsymbol{\tilde{x}}$};
\node[blockSVD,right=1cm of xtilde] (prep) {\large$\mathbf{V}$};
\node[blockSVD, right=1.5cm of prep](prepp) {\large
$\mathbf{V}^\h$};
\node[sum,right= .8cm of prepp.425] (lambda1) {$\times$};
\node[sum,right= .8cm of prepp.295] (lambda2) {$\times$};
\node[blockSVD, right=2cm of prepp] (postp) {\large$\mathbf{U}$};
\node[blockSVD, right=2.8cm of postp] (post) {\large$\mathbf{U}^\h$};
\node[right=1cm of post] (ytilde) {\large$ \boldsymbol{\tilde{y}}$};
\node[above=.3cm of lambda1] (lamda1) {\large$\lambda_1$};
\node[above=.3cm of lambda2] (lamda2) {\large$\lambda_N$};
\node[sum,right=1cm of postp] (ksi) {$+$};
\node[above=.3cm of ksi] (ksi1){\large$ \boldsymbol{\tilde{\xi}}$};
\draw[vecArrow] (xtilde) -- (prep) ;
\draw[vecArrow] (prep) --node[above]{\large$\boldsymbol{x}$} (prepp);
\draw[->,thick] (prepp.425)--(lambda1);
\draw[->,thick] (prepp.295)--(lambda2);
\draw[->,thick] (lambda1)--(postp.115);
\draw[->,thick] (lambda2)--(postp.245);
\draw[vecArrow] (postp) -- (ksi);
\draw[vecArrow] (ksi) --node[above]{\large$\boldsymbol{y}$} (post);
\draw[vecArrow] (post) -- (ytilde);
\draw[->,thick] (lamda1) -- (lambda1);
\draw[->,thick] (lamda2) -- (lambda2);
\draw[->,thick] (ksi1) -- (ksi);
\draw[dashed] (3.3,-2.3) rectangle (9.3,2.3);
\node[below=.5 of prep] (precoder) { pre-processing};
\node[below=.5 of post] (decoder) { post-processing};
\node[below=0.01 cm of lambda1] (inf1) {$ \vdots$};
\node[] at (6, -2.5) {channel};

\end{tikzpicture}
	\caption{Decomposition of the MIMO channel into $N$ parallel channels through SVD.}
	\label{Fig:SVD_Dec}
\end{figure*}

\begin{equation}
    \tilde{y}_l=\lambda_l\tilde{x}_l + \tilde{\xi}_l,\quad l=1,\ldots,N.
\end{equation}
The SVD can be interpreted as a pre-processing (multiplication with $\mathbf{V}$) and a post-processing (multiplication with $\mathbf{U}^\h$).
The optimization problem in \eqref{eq:capacity_MIMO} is reduced to \cite{Tse}
 \begin{equation}
 \begin{split}
 C(P,N_T\times N_R) = \max_{\tilde{P}_1,\ldots,\tilde{P}_N} \sum_{l=1}^{N} \log\left(1+ \frac{\lambda_l^2}{\sigma^2} \tilde{P}_l \right), \\
 \text{s.t.}  \sum_{l=1}^{N} \tilde{P}_l = P \quad \text{and} \quad \tilde{P}_l \geq 0,\ l=1,\ldots,N. \end{split}
 \end{equation}
The power $\tilde{P}_l$ is determined as follows \cite{Tse}.
\begin{equation}
\tilde{P}_l=\max \left(0, \mu-\frac{\sigma^2}{\lambda_l^2} \right),\quad l=1,\ldots,N.
\end{equation}
The expression above to compute $\tilde{P}_l$ is called the \emph{waterfilling} rule and $\mu$ is chosen to guarantee $\sum_{l=1}^{N} \tilde{P}_l = P$. Each of the $\lambda_l$ corresponds to an eigenmode of the channel, also called eigenchannel. In Fig. \ref{waterfilling}, we consider an example with four non-zero singular values $\lambda_1,\ldots,\lambda_4$. 
If the channel gain $\frac{\lambda_l^2}{\sigma^2}$ is large enough, power $\tilde{P}_l$ is given to the respective eigenmode. Otherwise, we allocate no power to the eigenchannel.
\begin{figure}[hbt!]
	\centering 
	\scalebox{.88}{
\pgfplotsset{
 width=\columnwidth,
  compat=newest,
  xlabel near ticks,
  ylabel near ticks
}
  \begin{tikzpicture}[font=\small]
    
    \begin{axis}[
     ymin=0,
     xmin=0.5,xmax=5.75, ymax=80,
     ybar=5pt, 
        x=1.5cm, 
        bar width=1.5cm, 
      ytick ={10, 30, 35, 40, 50},
      yticklabels={$\sigma^2/\lambda_1^2$, $\sigma^2/\lambda_2^2$ ,$\sigma^2/\lambda_3^2$ , $\mu$, $\sigma^2/\lambda_4^2$},
      xmajorgrids=false,
      axis x line=bottom,
      axis y line=left,
    xtick={1,2,3,4,5,6},
    ylabel = {power},
        xlabel = {channel index}]
        \addplot[fill=white] plot coordinates
            {(1,10) (2,30) (3,35) (4,50) (5,100)};
       \draw[dashed,-] (0.5,40) -- (5.5,40);
       \draw[<->,very thick] (1,10) -- node[left]{\tiny{$\tilde{P}_1$}} (1,40);
       \draw[<->,very thick] (2.3,30) -- node[left]{\tiny$\tilde{P}_2$} (2.3,40);
       \draw[<->,very thick] (3,35) -- node[left]{\tiny$\tilde{P}_3$} (3,40);
        \fill [blue,opacity=0.1, draw=none] (0.5,10) rectangle (1.5,40);
        \fill [blue,opacity=0.1, draw=none] (1.5,30) rectangle (2.5,40);
        \fill [blue,opacity=0.1, draw=none] (2.5,35) rectangle (3.5,40);
        \draw[-] (0.8,80) -- (0.8,50) -- (3.3,50)--(3.3,80);
        \draw[-,dashed] (0.8,75) -- (3.3,75);   
        \fill [blue,opacity=0.1, draw=none] (0.8,50) rectangle (3.3,75);
        \node[] at (2.5,60) (A) {};
        \node[] at (1.3,35) (B) {};
        \draw [->,thick] (A) to [out=200,in=50] (B);
        \node[] at (2.4,65) {$P$};
        \node[] at (4, 45) {\tiny$ \tilde{P}_4=0$};
        \node[] at (5,55) {\tiny $\tilde{P}_5=0$};
        \node[] at (5,70) { $\vdots$};
        \end{axis}
  \end{tikzpicture}}
	\caption{Example of waterfilling with five eigenvalues.}
	\label{waterfilling}
\end{figure}
\subsection{ Identification Capacity of the Gaussian MIMO Channel} 
We denote with $C(g,\tilde{P}_l)$ and $C_{ID}(g,\tilde{P}_l)$ the Shannon capacity and the identification capacity of each Gaussian sub-channel $l$, respectively. We assume that we decompose the channel in \eqref{Model:MIMO} as described above in Fig. \ref{Fig:SVD_Dec}. We have a serial-to-parallel conversion, where we convert each signal vector $\bs{x}$ into $N$ parallel scalar signals $\tilde{x}_l,\ 1\leq l\leq N$. Each signal vector lies in a Cartesian product of input sets. We perform transmit beamforming by multiplying each signal vector $\bs{x}$ with the matrix $\mathbf{V}^\h$. $\tilde{x}_l$ results from multiplying $\bs{x}$ with the $l$-th column of $\mathbf{V}$. We choose each power $\tilde{P}_l$ by waterfilling to maximize the capacity $C(P,N_T\times N_R)$ . Since we probably lose information via processing, we have:
\begin{equation}
C_{ID}(P,N_T\times N_R) \geq \sum_{l=1}^{N} C_{ID}(g,\tilde{P}_l) ,\label{eq:processing_ID}
\end{equation}
where $C_{ID}(P,N_T\times N_R)$ denotes the identification capacity of the MIMO Gaussian channel in \eqref{Model:MIMO}. Furthermore, we know from Theorem. \ref{theorem:ID_Gaussian} that for each sub-channel $l$ the following equation holds. 
\begin{equation}
C(g,\tilde{P}_l) =C_{ID}(g,\tilde{P}_l) .\label{eq:Gaussian_ID_Shannon} 
\end{equation}
From \eqref{eq:processing_ID} and \eqref{eq:Gaussian_ID_Shannon} we deduce a lower-bound on the ID capacity of the MIMO channel:

\begin{equation}
\begin{split}
C_{ID}(P,N_T\times N_R) & \geq \sum_{l=1}^{N} C(g,\tilde{P}_l) \\
& = C(P,N_T\times N_R) .
\end{split}
\end{equation}
The upper-bound on the ID capacity of the Gaussian MIMO channel is established in \cite{ITW_paper}.
{\color{black}{
\begin{remark}
The characterization of the identification capacity and secure identification capacity for MIMO Gaussian channels in this paper shows that similar to message transmission, signal processing, and channel coding can also be performed separately. This solution allows identification and secure identification to be directly integrated as optimization tasks in resource allocation. This enables important techniques of resource allocation \cite{Resource_Alocation1,Resource_Allocation2} to be used directly. 
\end{remark}}}
 \section{Conclusions} \label{sec:conclusions}
We have examined identification over Gaussian channels for their practical relevance. We have extended the direct proof of the identification coding theorem to the Gaussian case. We recall that the transition from the discrete case to the Gaussian case is not obvious. Burnashev has shown that for the white Gaussian noise channel without bandwidth constraint and finite Shannon capacity, the corresponding identification capacity is infinite. The main focus of the paper was to calculate the secure identification capacity for the GWC. Secure identification for discrete alphabets has been extensively studied over the recent decades due to its important potential use in many future scenarios. For discrete channels, it was proved that secure identification is robust under channel uncertainty and against jamming attacks. However, for continuous alphabets, no results have yet been established. This problem seems to be difficult because the wiretapper, in contrast to the discrete case, is not limited anymore and has an infinite alphabet. This is advantageous for the wiretapper since he has no limit on the hardware resolution. This means that the received signal can be resolved with infinite accuracy. In this paper, we completely solved the Gaussian case for its practical relevance. In particular, we have provided a coding scheme for secure identification over the GWC and have determined the corresponding capacity.
 Theorem \ref{main Theorem} says that secure identification is possible for continuous channels, particularly for the Gaussian case. We also studied identification over single-user Multiple-Input Multiple-Output (MIMO) channels, which are of major interest for wireless communications. We derived a lower bound on the corresponding capacity. As a direct continuation of this work, it would be interesting to investigate message identification for continuous-time channels. Future research might also explore secure identification over MIMO channels. 

\appendix
\section{Appendix}
 
\subsection{Quantization of the Gaussian Wiretap Channel} \label{QuantizationWiretapAppendix}
 We consider the same GWC $(\Wg,\Vg)$ described in \eqref{GaussianChannel}, where each input signal $x_i$ lies in the set $[-\sqrt{nP},\sqrt{nP}]$ for all $i\in \{1,\ldots,n\}$. The output sets are $\sety$ and $\setz$, which lie in the set of real numbers. For convenience and consistency with the quantization scheme used in \cite{Bur00}, we set $P=a^2$. We modify the scheme in \cite{Bur00} to be applicable on the GWC. Note that g and g', as normal distributions, satisfy the regularity conditions cited in \cite{Burnashev}. This means, there exist some constants $K, K_1,\gamma,\alpha$ such that:\\
 For $u>0$ and $1<\gamma\leq 2$
 	\begin{align}
 \int_{-\infty}^{\infty}\left(\max_{|t-x|\leq u} \sqrt{\text{g}(t)} - \min_{|t-x|\leq u} \sqrt{\text{g}(t)} \right)^2 dx \leq Ku^\gamma \label{Const1}, \end{align}
 For $z>0$ and $\alpha>2 $
 \begin{align}
 &\int_{|x| \geq z} \text{g}(x) dx \leq K_1z^{-\alpha} \label{Const2}. \end{align}
 \begin{equation}
1/\alpha + 1/\gamma < 1. \label{Const3} 
 \end{equation}
This also holds for g', i.e., there exist some constants $K', K'_1,\gamma',\alpha'$ such that the aforementioned regularity conditions are fulfilled. It can easily be checked that we can choose constants as follows:
\begin{align*}
&K=\frac{1}{\sigma^2},\ K'=\frac{1}{\sigma'^2}, \\
&K_1=K'_1=1, \\
&\gamma=\gamma'=2, \\
&\alpha=\alpha'=3.
\end{align*}
For more details, we refer the reader to \cite{MasterThesis}. The quantization of $(\Wg,\Vg)$ consists of the two following steps.
\begin{enumerate}
\item {\bf{Quantization of the Input Set $[-\sqrt{nP},\sqrt{nP}]$}}\\
 First, we quantize the interval $[-\sqrt{nP},\sqrt{nP}]$ by selecting within it the lattice $\mathcal{L}_x$ with span $\delta_x$. The span $\delta_x$ is the maximum distance between two points in the one-dimensional lattice $\setl_x$. We approximate any input vector $x^n=(x_1,x_2,\ldots,x_n) \in \setx_{n,P}$ by $\hat{x}^n=(\hat{x}_1,\hat{x}_2,\ldots,\hat{x}_n) \in \mathcal{L}_x^n$ such that $\hat{x}_i$ is the closest to $x_i$ in terms of Euclidean distance with $|\hat{x}_i| \leq |x_i|$. We denote the output measures generated by the input vector $x^n$ on $\Wg$ and $\Vg$ by $Q_{x^n}\Wg^n(\cdot)$ and $Q_{x^n}\Vg^n(\cdot)$ respectively. Analogously, $Q_{\hat{x}^n}\Wg^n(\cdot)$ and $Q_{\hat{x}^n}\Vg^n(\cdot)$ are the generated output measures by the input $\hat{x}^n$ on $\Wg$ and $\Vg$, respectively. It follows from $(36)$ in \cite{Bur00} that for $\delta > 0$:
 	\begin{align}
 	d\left(Q_{x^n}\Wg^n(\cdot),Q_{\hat{x}^n}\Wg^n(\cdot)\right) & \leq  2 \sqrt{Kn\delta_x} \leq 2\delta, \label{discreteinput1}\\
 	d\left(Q_{x^n}\Vg^n(\cdot),Q_{\hat{x}^n}\Vg^n(\cdot)\right) & \leq  2 \sqrt{K'n\delta_x} \leq 2\delta. \label{discreteinput2}
 	\end{align}
 	We define $K_{max}$ as follows:
 	 \begin{equation} K_{max}=\max(K,K'). \end{equation} We set $\delta_x=\left(\frac{\delta^2}{nK_{max}}\right)^{\frac{1}{\gamma}}$ and denote the cardinality of $\mathcal{L}_x$ by $L_x$. We have:
 	\begin{equation}
 	L_x \leq \frac{2a}{\delta_x}+2= 2a\left(K_{max}n\delta^2 \right)^{\frac{1}{\gamma}}+2.
 	\end{equation}
 	We denote the new quantized channels with discrete input alphabet $\setl_x$ by $W_{g,\setl_x}$ and $V_{\text{g}',\mathcal{L}_x}$.
 	 Let $\pi(x^n)$ be any probability distribution on $\setx_{n,P}$ generating the output measures $Q_{\pi}^nW^n_{\text{g}}(\cdot)$ on $\sety^n$ and $Q_{\pi}^n\Vg^n(\cdot)$ on $\setz^n$. From \eqref{discreteinput1} and \eqref{discreteinput2}, we deduce that we can find some input distribution $\hat{\pi}(x^n) \in \mathcal{P}(\setl_x^n) $ generating  output measures $Q_{\hat{\pi}}^nW^n_{\text{g},\mathcal{L}_x}(\cdot)$ and $Q_{\hat{\pi}}^nV^n_{\text{g}',\mathcal{L}_x}(\cdot)$ on $W_{\text{g},\mathcal{L}_x}$ and $V_{\text{g}',\mathcal{L}_x}$, respectively such that for $\delta>0$
 	 \begin{align}
 	 d\left(Q^n_{\pi}\Wg^n(\cdot),Q^n_{\hat{\pi}}W^n_{\text{g},\mathcal{L}_x}(\cdot)\right) & \leq 2\delta, \label{input} \\
 	 d\left(Q^n_{\pi}\Vg^n(\cdot),Q^n_{\hat{\pi}}V^n_{\text{g}',\mathcal{L}_x}(\cdot)\right) & \leq 2\delta.
 	 \end{align}
 	 The new noise functions defined on $[-z_0,z_0]$ are denoted by g$_{z_0}$ and g'$_{z_0}$.
 \item {\bf Quantization of $\sety$ and $\setz$} \\
 We first consider the output set $\sety$. 
 We proceed as described in \cite{Bur00}. First, we choose $z_0= a+\left(\frac{K_1n}{\delta}\right)^{\frac{1}{\alpha}}$. We assign $z_0$ to all outputs $y$ with $|y|> z_0$ and denote the new channel with discrete input set $\setl_x$ and bounded output set $[-z_0,z_0]$ by $W_{\text{g},\mathcal{L}_x,z_0}$. Thus, if $\hat{\pi}(x^n) \in \mathcal{P}(\setl_x^n)$ generates the output measures $Q_{\hat{\pi}}^nW^n_{\text{g},\mathcal{L}_x}(\cdot)$ and $Q_{\hat{\pi}}^nW^n_{\text{g},\mathcal{L}_x,z_0}(\cdot)$ on $W_{\text{g},\mathcal{L}_x}$ and $W_{\text{g},\mathcal{L}_x,z_0}$, respectively, then for $\delta>0$
 \begin{equation}
 d\left(Q_{\hat{\pi}}^nW^n_{\text{g},\mathcal{L}_x}(\cdot),Q_{\hat{\pi}}^nW^n_{\text{g},\mathcal{L}_x,z_0}(\cdot)\right) \leq 2\delta. \label{z0}
 \end{equation}
 Now, we choose in $[-z_0,z_0]$ the lattice $\mathcal{L}_y$ with span $\epsilon$, where
 \begin{equation}
 \epsilon=\left(\frac{\delta^2}{K_{max}n}\right)^{\frac{1}{\gamma}},\quad \delta>0.
 \end{equation}
 

 We correspondingly approximate g$_{z_0}$ by a piece-wise constant noise function $\text{g}_{\epsilon}$. 
 We denote the cardinality of $\mathcal{L}_y$ by $L_y$. We have:
 \begin{equation}
 L_y \leq \frac{2z_0}{\epsilon} + 4 = 2 \left(\frac{Kn}{\delta^2} \right)^{\frac{1}{\gamma}} \left(a+ \left(\frac{nK_1}{\delta}\right)^{\frac{1}{\alpha}} \right) +4.
 \end{equation}
 
 We denote the new channel with discrete input alphabet $\setl_x$ and discrete output alphabet $\setl_y$ by $\tW$. If $\hat{\pi}(x^n) \in \mathcal{P}(\setl_x^n)$ generates the output measures $Q_{\hat{\pi}}^nW^n_{\text{g},\mathcal{L}_x,z_0}(\cdot)$ and $Q_{\hat{\pi}}^n\tW^n(\cdot)$ on $W_{\text{g},\mathcal{L}_x,z_0}$ and $\tW$, respectively, then for $\delta>0$
 \begin{equation}
 d\left( Q_{\hat{\pi}}^nW^n_{\text{g},\mathcal{L}_x,z_0}(\cdot),Q_{\hat{\pi}}^n\tW^n (\cdot) \right) \leq 2\delta. \label{output}
 \end{equation}
 From \eqref{input}, \eqref{z0} and \eqref{output}, we deduce that if $\pi(x^n)$ is a probability distribution on $\setx_{n,P}$ generating the output measures $Q_{\pi}^n\Wg^n(\cdot)$ on $\Wg$, we can find some input distribution $\hat{\pi}(x^n),\quad \hat{\pi}(x^n) \in \mathcal{P}(\setl_x^n) $ generating the output measure $Q_{\hat{\pi}}^n\tW^n(\cdot)$ on $\tW$ such that for $\delta>0$
 \begin{align}
 d\left(Q_{\pi}^n\Wg^n(\cdot),Q_{\hat{\pi}}^n\tW^n(\cdot)\right) \leq 6\delta,\quad \delta>0. 
 \end{align}
 The quantization of $\setz$ follows the same scheme because $\sety=\setz=\mathbb{R}$ and the quantization of $\sety$ and $\setz$ depends on $K_{max}$ and the constants $K_1,\alpha, \gamma$. These constants are the same for both noise functions $\text{g}$ and $\text{g}'$. We have 
 \begin{equation}
     d\left(Q_{x^n}\Vg^n(\cdot),Q_{\hat{\pi}}^n\tV^n(\cdot)\right) \leq 6\delta,\quad \delta>0,
 \end{equation}
 where $\setl_z$ is the new discrete output set of Eve's channel and $\tV$ denotes the new discrete Eve's channel. $Q_{x^n}\Vg^n(\cdot)$ is the output measure generated by the input $Q_{x^n}$ on $\Vg$ and $Q_{\hat{\pi}}^n\tV^n(\cdot)$ is the output measure generated by the input $Q_{\hat{\pi}}^n$ on $\tV$.
\begin{remark}
If the wiretap channel has a positive secrecy capacity $C(\text{g},\text{g}',P)$, $K_{max}=\max(\frac{1}{\sigma^2},\frac{1}{\sigma'^2})=\frac{1}{\sigma^2}=K$.
\end{remark}
\end{enumerate}
  \subsection{Quantization of a Discrete-Time Channel with Additive Noise and Infinite Input and Output Alphabets} 
  
  \label{appendixB}
 Burnashev considered in \cite{Bur00} the approximation of output measures for channels with infinite alphabets. In this section, we give a summary of the quantization scheme, which we modified and applied on the GWC.
 \subsubsection{Channel Model} 
 The following additive noise channel is considered.
 \begin{equation}
     y_i=x_i+\xi_i,\quad \forall i \in \{1,\ldots,n\}.
 \end{equation}
 $x^n=(x_1,x_2,\ldots,x_n)$ is the channel input sequence and $y^n$ is the output sequence.
 	$\xi_i$ are iid for all $i \in \{1,\ldots,n\}$ and drawn from a noise function f satisfying the regularity conditions described in \cite{Burnashev}. 
 The channel input fulfills the following peak power constraint:  \begin{equation}|x_i| \leq  a,\quad a>0,\quad \forall i=1,\ldots,n .
 \label{peakConstraint}
 \end{equation}
 It follows from \eqref{peakConstraint} that the new constrained input set is $\setx=[-a,a]$.
 The output set is infinite $\sety= \mathbb{R}$.
 The Shannon capacity of the channel $W_{\text{f}}$ is denoted by $C(\text{f},a)$.
 
\subsubsection{Quantization Steps}

The channel $W_{\text{f}}$ should be approximated by a discrete channel with input and output alphabets $\setl_x$ and $\setl_y$, respectively. $\setl_x$ and $\setl_y$ should fulfill the following inequalities:
\begin{align}
\lim_{n \to \infty} \frac{|\setl_x|}{n} \ln n & = 0, \\
\lim_{n \to \infty} \frac{|\setl_y|}{n} \ln^4 n & =0.
\end{align}
 \begin{enumerate}
 	\item 
 	 $\setx=[-a,a]$ is quantized by choosing in it a lattice $\setl_x$ with span $\delta_x$. Any input vector $u^n \in \setx^n$ is approximated by $u^{\prime n} \in \setl_x^n$. $u^{\prime}_i$ is chosen to be the closest to $u_i$ in terms of euclidean distance with $|u_i^{\prime}| \leq |u_i|$. Let $Q_{u^n}W_{\text{f}}^n(\cdot)$ and $Q_{u^{\prime n}}W_{\text{f}}^n(\cdot)$
 	 be the output measures generated by $u^n \in \setx^n$ and $ u^{\prime n} \in \setl_x^n$ over $W_{\text{f}}$, respectively. We have
 	\begin{align}
 	d\left(Q_{u^n}W_{\text{f}}^n(\cdot),Q_{u^{\prime n}}W_{\text{f}}^n(\cdot)\right) & \leq 2\left(K \sum_{i=1}^{n} |u_i-u^{\prime}_i|^\gamma\right)^{\frac{1}{2}} \nonumber\\
 	& \leq 2 \sqrt{Kn\delta_x^\gamma} .
 	\end{align}
 	We set $\delta_x=\left( \frac{\delta^2}{nK} \right)^{\frac{1}{\gamma}}$. The cardinality of $\setl_x$ is denoted by $L_x$ and is upper-bounded as follows:
 	\begin{equation}
 	L_x \leq \frac{2a}{\delta_x} +2 .
 \end{equation}
 The original channel $W_{\text{f}}$ is then approximated by the channel ${W}_{\text{f},\setl_x}$ with finite input alphabet $\setl_x$.
 Generally, let $\pi(x^n)$ be any prior distribution on $\setx^n$ generating $Q^n_{\pi}W^n_{\text{f}}(\cdot)$ on $\sety^n$. $\pi^\prime(x^n),\ x^n \in \setl_x^n$ is chosen such that the generated output measure $Q^n_{\pi^\prime}{W}^n_{\text{f},\setl_x}(\cdot)$ satisfies the following inequality:
 \begin{equation}
 d\left(Q^n_{\pi}W^n_{\text{f}}(\cdot),Q^n_{\pi^\prime}{W}^n_{\text{f},\setl_x}(\cdot)\right) \leq 2\delta,\quad \delta>0. \label{firstApp}
 \end{equation}
 \item Now, the output set $\sety$ should be bounded. For this purpose, the following quantization function is defined.
 \begin{align} 
 & l \colon  \mathbb{R} \longrightarrow [-z_0,z_0], \\
 & l(y)= \begin{cases}
 y & y \in [-z_0,z_0], \\ z_0 & \text{else},
 \end{cases}
 \end{align}
 where $z_0=a+b,\quad b>0$. The probability that an output vector $y^n$ is not in $[-z_0,z_0]^n$ can be computed as follows.
 \begin{align}
 \Pr\{y^n \notin [-z_0,z_0]^n \} & \overset{(a)}{\leq} \sum_{i=1}^{n} \Pr\{y_i \notin [-z_0,z_0] \} \\
 & \overset{(b)}{=} \sum_{i=1}^{n} \int_{|\xi_i| \geq z_0-x_i} \text{f}(\xi_i) d\xi_i \\ & \overset{(c)}{\leq} \sum_{i=1}^{n} K_1 (z_0-x_i)^{-\alpha}  \\ & \overset{(d)}{\leq} K_1 n b^{-\alpha} .
 \end{align}
 $(a)$ follows by the union bound. $(b)$ follows because $y_i=x_i+\xi_i$. $(c)$ follows by \eqref{Const2}. $(d)$ follows because $x_i \leq a,\quad \forall i=1,\ldots,n$. \\
If we choose $b=\left(\frac{K_1n}{\delta}\right)^{\frac{1}{\alpha}}$, we have then
\begin{equation}
\Pr\{y^n \notin [-z_0,z_0]^n \} \leq \delta, \quad \delta>0.
\end{equation} 
Thus, the channel $W_{\text{f},\setl_x}$ is approximated by another channel $W_{\text{f},\setl_x,z_0}$ with bounded output alphabet. Let $\pi^\prime(x^n),\ x^n \in \setl_x^n$ be any prior distribution generating $Q^n_{\pi^\prime}{W}^n_{\text{f},\setl_x}(\cdot)$ and $Q^n_{\pi^\prime}{W}^n_{\text{f},\setl_x,z_0}(\cdot)$ on $W_{\text{f},\setl_x}$ and $W_{\text{f},\setl_x,z_0}$, respectively. We have 
\begin{equation}
d\left(Q^n_{\pi^\prime}{W}^n_{\text{f},\setl_x}(\cdot),Q^n_{\pi^\prime}{W}^n_{\text{f},\setl_x,z_0}(\cdot)\right) \leq 2\delta. \label{secondApp}
\end{equation}
\item Now $[-z_0,z_0]$ should be approximated by a finite set.
 $$[-z_0,z_0]=\cup_{i=-i_0}^{i_0} [i\epsilon,(i+1)\epsilon)+ \{z_0\},\quad i_0=\frac{z_0}{\epsilon},\ \epsilon>0.$$  $[-z_0,z_0]$ is quantized by choosing in it the lattice $\setl_y$.
$$ \setl_y= \{y \colon y=i\epsilon,\quad i=-i_0,\ldots,i_0\},\quad i_0=\frac{z_0}{\epsilon}.$$
Next, the noise function f is approximated by a piece-wise constant density $\text{f}_{\epsilon}$ defined as follows:
\begin{equation*}
\text{f}_\epsilon(y-x)= \begin{cases} \frac{1}{\epsilon} \int_{i\epsilon}^{(i+1)\epsilon} \text{f}(t-x) dt, & y \in \setd_\epsilon \\
\text{f}(z_0-x) + \int_{|\xi|>z_0} \text{f}(\xi-x) d\xi,  & y=z_0, \end{cases}
\end{equation*}
where $\setd_\epsilon= [i\epsilon, (i+1)\epsilon)$. The new discrete channel is denoted by $\tilde{W}_{\text{f}}$, which has finite input and output alphabets $\setl_x$ and $\setl_y$, respectively. If $\pi^\prime(x^n),\ x^n \in \setl_x^n$ is any prior distribution generating $Q^n_{\pi^\prime}W^n_{\text{f},\setl_x,z_0}(\cdot)$ and $Q^n\tilde{W}^n_{\text{f}}(\cdot)$ on $W_{\text{f},\setl_x,z_0}$ and $\tilde{W}_{\text{f}}$, respectively, then for $\delta>0$:
\begin{equation}
d\left(Q^n_{\pi^\prime}{W}^n_{\text{f},\setl_x,z_0}(\cdot),Q^n\tilde{W}^n_{\text{f}}(\cdot)\right) \leq 2\delta. \label{thirdApp}
\end{equation}
\end{enumerate}
Combining \eqref{firstApp}, \eqref{secondApp} and \eqref{thirdApp}, we deduce the following lemma \cite{Bur00}. 
\begin{lemma}\cite{Bur00}
If $\pi(x^n)$ any prior distribution on $\setx^n$ generating $Q^n_{\pi}W^n_{\text{f}}(\cdot)$ on $\sety^n$, then we choose some prior distribution $\pi^\prime(x^n),\ x^n \in \setl_x^n$ generating on the channel $\tilde{W}_{\text{f}}$ an output measure $Q^n\tilde{W}^n_{\text{f}}(\cdot)$ such that the following inequality holds:
\begin{equation}
d\left(Q^n_{\pi}W^n_{\text{f}}(\cdot),Q^n\tilde{W}^n_{\text{f}}(\cdot)\right) \leq 6\delta,\quad \delta>0.
\end{equation}
\end{lemma} 
 
\begin{remark}
Burnashev extended this quantization to additive noise channels with average power constraint, i.e., the input $x^n=(x_1,x_2,\ldots,x_n)$ satisfies the following inequality:
\begin{equation}
\frac{1}{n} \sum_{i=1}^{n} x_i^2 \leq P.
\end{equation}
Thus, the same quantization scheme can also be applied to Gaussian channels with an average power constraint.
 For a more detailed description, we refer the reader to \cite{Bur00}.
\end{remark}

\section*{Acknowledgments}

The authors acknowledge the financial support by the Federal Ministry of Education and Research of Germany in the program of “Souverän. Digital. Vernetzt.”. Joint project 6G-life, project identification number: 16KISK002.{\color{black}{ Holger Boche and Christian Deppe further gratefully acknowledge
the financial support by the BMBF Quantum Programm QD-CamNetz, Grant
16KISQ077, QuaPhySI, Grant 16KIS1598K, and QUIET, Grant 16KISQ093.}} Christian Deppe and Wafa Labidi were supported by the Bundesministerium 
f\"ur Bildung und Forschung (BMBF) through Grants 16KIS1005 and 16KIS1003K, {\color{black}{respectively}}.
%




%
\bibliographystyle{IEEEtran}
\bibliography{IEEEabrv,confs-jrnls,refrences}

\end{document}